    \newcolumntype{C}[1]{>{\centering\let\newline\\\arraybackslash\hspace{0pt}}m{#1}}
\newtheorem{theorem}{Theorem}
\newtheorem{lemma}{Lemma}
\newtheorem{corollary}{Corollary}
\newtheorem{definition}{Definition}
\newcommand{\len}[1]{\mathrm{len}(#1)}
\newcommand{\size}[1]{|#1|}
\newcommand{\set}[1]{\{#1\}}
\newcommand{\gged}[0]{\textsc{Geometric Graph Edit Distance}}
\providecommand{\C}{}\renewcommand{\C}{\mathcal{C}}
\providecommand{\E}{}\renewcommand{\E}{\mathcal{E}}
\newcommand{\I}{\mathcal{I}}
\newcommand{\J}{\mathcal{J}}
\renewcommand{\S}{\mathcal{S}}
\newcommand{\complete}[0]{\Pi_{\mathrm{comp}}}
\newcommand{\edgeless}[0]{\Pi_{\mathrm{edgeless}}}
\newcommand{\acyclic}[0]{\Pi_{\mathrm{acyc}}}
\newcommand{\kclique}[1][k]{\Pi_{#1\mathrm{-clique}}}
\newcommand{\nokclique}[1][k]{\overline{\Pi_{#1\mathrm{-clique}}}}
\newcommand{\kconnected}[1][k]{\Pi_{#1\mathrm{-conn}}}
    \crefname{problem}{Problem}{Problems}
    \crefname{itembox}{Problem}{Problems}
    \crefname{algocf}{algorithm}{algorithms}
    \Crefname{algocf}{Algorithm}{Algorithms}
    \crefname{AlgoLine}{line}{lines}
    \Crefname{AlgoLine}{Line}{Lines}
    \crefname{observation}{observation}{observations}
    \Crefname{observation}{Observation}{Observations}
    \crefname{mtheorem}{theorem}{theorems}
    \Crefname{mtheorem}{Theorem}{Theorems}
    \crefname{mlemma}{lemma}{lemmas}
    \Crefname{mlemma}{Lemma}{Lemmas}
    \crefname{mcorollary}{corollary}{corollaries}
    \Crefname{mcorollary}{Corollary}{Corollaries}
\def\ps@pprintTitle{%
   \let\@oddhead\@empty
   \let\@evenhead\@empty
   \def\@oddfoot{\reset@font\hfil\thepage\hfil}
   \let\@evenfoot\@oddfoot
}
\begin{document}
\begin{frontmatter}



\title{Algorithms for Optimally Shifting Intervals under Intersection Graph Models}

 \author[meidai]{Nicol{\'a}s Honorato-Droguett\corref{cor}}
 \ead{honorato.droguett.nicolas.n7@s.mail.nagoya-u.ac.jp}
 \author[okadai]{Kazuhiro Kurita}
 \ead{k-kurita@okayama-u.ac.jp}
 \author[kyudai]{Tesshu Hanaka}
 \ead{hanaka@inf.kyushu-u.ac.jp}
 \author[meidai]{Hirotaka Ono}
 \ead{ono@i.nagoya-u.ac.jp}
 \affiliation[meidai]{organization={Nagoya University},
             addressline={Department of Mathematical Informatics, Furo-cho, Chikusa-ku},
             city={Nagoya},
             postcode={464-8601},
             country={Japan}}
 \affiliation[kyudai]{organization={Kyushu University},
             addressline={Department of Informatics, Motooka, Nishi-ku},
             city={Fukuoka},
             postcode={819-0395},
             country={Japan}}
 \affiliation[okadai]{organization={Okayama University},
             addressline={Information Technology Advanced Track, Kita-ku},
             city={Okayama},
             postcode={700-0914},
             country={Japan}}

\cortext[cor]{Corresponding author}

\tnotetext[t1]{A preliminary version~\cite{HonoratoDroguett2024} of this paper appeared in \emph{Proceedings of Frontiers of Algorithmics (IJTCS-FAW 2024)}.}




\begin{abstract}
In well-studied graph modification problems, adding and deleting vertices and edges are used as graph editing operations.
We propose a model for graph modification on geometric intersection graphs called {\gged} that moves objects as an edit operation.
Our results are mainly focused on interval graphs.
In particular, we give a linear-time algorithm to find the minimum total moving distance to render an interval graph complete.
The approach of this algorithm can be applied for:
(i) rendering a unit square graph complete over the $L_1$ distance and
(ii) attaining the existence of a $k$-clique on unit interval graphs.
In addition, we provide LP-formulations to achieve several properties in the associated graph of unit intervals.
\end{abstract}




\begin{keyword}
Intersection Graphs \sep Optimisation \sep Graph Modification


\end{keyword}

\end{frontmatter}


\section{Introduction}
Graph modification is a fundamental topic when analysing graph (dis)similarity. Generally speaking, a given graph is deformed by adding or deleting vertices or edges to achieve a specific non-trivial graph property, while minimising the cost of edit operations. 
The problems of determining this cost are commonly referred to as \emph{graph modification problems} (GMP) and have applications in various domains, such as computer vision~\cite{Chung1994}, network interdiction~\cite{Hoang2023}, and molecular biology~\cite{Hellmuth2020}. 
GMPs are often categorised into vertex and edge modification problems, depending on whether the edit operations are restricted to the vertex and edge sets, respectively.
The output of a GMP is defined as a graph contained in a non-trivial \emph{graph class} (i.e. a set of graphs satisfying some common properties).

In graph modification problems, the cost of a single edit operation is typically determined by the specific application. 
Theoretical studies often assume a unit-cost model, where all edit operations have uniform cost. 
Nevertheless, it is known that determining whether a graph can be modified is \NP-hard for a wide range of graphs and graph classes~\cite{Lewis1980,Burzyn2006,Fomin2015,Sritharan2016} even with these assumptions. 
These negative bounds of GMPs motivate alternative formulations for graph editing that consider domain-specific constraints and more suitable cost models.

\emph{Geometric intersection graphs} (hereafter intersection graphs) are arguably one of the most illustrative examples of graphs that are inherently tied to their application domain, which arise in any scenario where relationships are induced by geometric overlap.
For instance, they can be frequently found in real-world applications such as network modelling and bioinformatics~\cite{McKee1999}.
Formally, intersection graphs are defined as follows: given a collection of geometric objects $\S$ (that is, a collection of non-empty subsets of $\mathbb{R}^d$), the intersection graph $G(\S)$ is the graph for which there is a one-to-one correspondence between the vertex set and $\S$, and there is an edge between two vertices if and only if their corresponding geometric objects intersect. 
This model includes many well-known graph classes, such as interval graphs and disk graphs ($\S$ is a collection of intervals and disks, respectively).

Defining suitable edit operations and costs is crucial when dealing with graph modification, as different formulations emphasise different structural properties and computational challenges. 
Similarly to string similarity analysis, where modifications are based on biologically significant operations such as DNA mutations and repeats~\cite{Li2009}, graph modification problems should reflect the inherent constraints and structural properties of the graphs being edited. 
In this line of reasoning, intersection graphs therefore provide a suitable framework for studying GMPs for scenarios where graphs represent spatial relationships, which has already been addressed in recent work~\cite{Panolan2024,Berg2019,fomin2023,Fomin2025,Fomin2024-2}.

Motivated by the above, this paper investigates GMPs for intersection graphs. In this context, two natural questions arise: 
\begin{enumerate}[noitemsep,nosep]
    \item Under the intersection graph model, are edit operations based solely on the graph representation adequate?
    \item How can the geometric properties of objects be exploited to overcome the hardness of GMPs?
\end{enumerate}
To answer these questions, we introduce {\gged}, a model to modify intersection graphs from a geometric perspective.

In the intersection graph model, a natural edit operation is to move the objects in $\S$.
A suitable movement of objects to meet a certain condition is desirable in various applications.
One may require that objects are far from each other (i.e. not overlapping), or conversely, one may desire that objects be close together (i.e. overlapping).
The notion of optimality comes straightforwardly from the operation; the cost can be defined naturally as the minimum distance required to meet specified conditions.
Consider scheduling problems, where jobs are represented as intervals over time. To effectively assign resources to each job, it is required that intervals do not overlap and, if necessary, are shifted so that all overlaps are removed while preserving the original sequence as much as possible.
In wireless network systems, strong signal transmission between routers is crucial to ensure a reliable connection.
However, communication may still fail even when the antennas operate in maximum range.
Therefore, optimally relocating the routers becomes the main objective to ensure a certain transmission threshold on the network.
These scenarios can be naturally modelled using intersection graphs of intervals and disks, where objects are moved so that the graph is in a desired graph class, such as \emph{edgelessness} (no overlapping) and \emph{connectedness} (overlapping).

Let $\S$ be a collection of geometric objects and $\Pi$ be a graph class.
Taking into account the above context, we treat the movement as a graph edit operation and focus on minimising the cost required to modify $\S$ so that its intersection graph is in $\Pi$. 
The cost is quantified by the total moving distance, which is the sum of the distances by which objects in $\S$ are moved. 
More precisely, we define the problem as follows:
\begin{itembox}[l]{{\gged}}
    \begin{description}[itemsep=0pt,align=left,leftmargin=50pt,labelindent=5pt,style=multiline]
        \item[Input:] A finite collection of $n$ geometric objects. 
        \item[Output:] The minimum total moving distance applied to the objects so that the resulting intersection graph of the collection is contained in the graph class $\Pi$.
    \end{description}
\end{itembox}

We assume that $\Pi$ is given by an oracle. That is, we are provided with an algorithm to determine whether the intersection graph of a collection of geometric objects is in $\Pi$.
In the above context, we primarily focus on collections of (unit) intervals.
This intersection graph is called \emph{(unit) interval graph}.
We formally define these graphs in \Cref{sec:preliminaries}. 
We also let the distance function be \emph{weighted}; that is, the distance function has a multiplicative weight associated with it.
When the objects are weighted, we call the graph a \emph{weighted intersection graph} (for instance, a weighted interval graph).

\subsection{Related Work}

In the early 1980s, Lewis and Yannakakis~\cite{Lewis1980} showed that vertex-deletion problems are \NP-complete for obtaining a graph in an arbitrary hereditary graph class. 
Since then, numerous GMPs have been discovered to be computationally hard. 
Specifically, numerous edge modification problems for obtaining graphs such as perfect, chordal and interval graphs have been shown to be \NP-complete~\cite{Shamir2004,Burzyn2006} (see also~\cite[Sec.~A1.2]{garey1979}). 
As a result, the past decade has seen a growing interest in addressing these problems from the perspective of parameterised complexity. 
The recent survey by Crespelle et al.~\cite{Crespelle2023} provides a comprehensive overview of this subject (see also~\cite{Drange2015}, and~\cite{Cai1996} for results on vertex modification). 

\paragraph{Geometric Graph Modification} The systematic study of graph modification problems on geometric intersection graphs is attributed to the papers~\cite{fomin2023,Fomin2025,Fomin2024-2}.
In particular, Fomin et al.~\cite{fomin2023} studied the \emph{disk dispersal} problem, where a set $\S$ of $n$ disks, an integer $k \geq 0$, and a real number $d \geq 0$ are given, and the goal is to determine whether an edgeless disk graph (i.e. a graph where there are no disks overlapping) can be realised by moving at most $k$ disks by at most $d$ distance each. 
They showed that this problem is {\NP-hard} when $d=2$ and $k = n$ and also {\FPT} when parameterised by $k+d$. 
Furthermore, they showed that the problem is \W[1]-hard when parameterised by $k$ when the movement is restricted to rectilinear directions.
The same authors subsequently conducted a study of edge modification problems in which \emph{scaling} objects is considered as the edit operation~\cite{Fomin2025}. 
In particular, their work includes several parameterised complexity results for cases of cardinality and minimisation modification problems to achieve edgelessness, acyclicity and connectivity on disk graphs.
These results exemplify how alternative edit operations that take into account the underlying geometry of intersection graphs can impact the computational complexity of graph modification problems.

\paragraph{Geometric Median} Lastly, we briefly introduce the \emph{geometric median} (also known as the Weber point, minisum point, Fermat-Weber point and Fermat-Torricelli point), which happens to be closely related to our model for classes of dense graphs.
The geometric median of $n$ points $x_1,\ldots,x_n$ in $\mathbb{R}^2$ is defined as the point $x$ that minimises $\sum_{i = 1}^n w_i \lVert x,x_i\rVert_2$, where $w_i$ is a positive \emph{multiplicative weight} and $\lVert\cdot,\cdot\rVert_2$ is the $L_2$ distance.
Finding the geometric median, known as the \emph{geometric median problem}, dates back to the 17th century and its original proposer remains unknown (for a comprehensive book on this subject, see~\cite{Drezner2001}).
The first algorithm to solve the unweighted case of the geometric median problem is due to Weiszfeld~\cite{Weiszfeld2009} and is widely known as \emph{Weiszfeld Algorithm}.
Although the geometric median problem is one of the first non-trivial problems in computational geometry, it has been proven that there exists no exact algorithm for it~\cite{Bajaj1988}.
For instance, the Weiszfeld Algorithm fails to converge to an optimal point in certain cases.
The current best bound to find the geometric median is an $(1+\epsilon)$-approximation algorithm proposed by Cohen et al.~\cite{Cohen2016}, which runs in nearly linear time $O(nd\log^3\frac{n}{\epsilon})$ and $O(d\epsilon^{-2})$ for the unweighted case and in $O(nd\log^3\epsilon^{-1})$ time for the weighted case.
We discuss the relation of the geometric median and {\gged} for obtaining complete graphs at the end of \Cref{sec:algorithm}.

\subsection{Our Contribution}

Our results focus mainly on interval graphs and are summarised in \Cref{tab:summary}. 
We deal with the following graph classes: $\complete$ (complete graphs), $\edgeless$ (edgeless graphs), $\acyclic$ (acyclic graphs), $\kclique$ (graphs that contain a $k$-clique), $\nokclique$ and $\kconnected$ ($k$-connected graphs).

\begin{table}[!b]
    \setlength\extrarowheight{2pt}
    \centering
    \caption{Summary of our results. In this table, UIG, IG, and USG are abbreviations of unit interval graphs, interval graphs and unit square graphs, respectively.}\label{tab:summary}
    \begin{tabular}{ccccc}
         & {\bfseries Target Graph} &  &  & \\
         \multirow{-2}{*}{{\bf Graph}} & {\bfseries Class} & \multirow{-2}{*}{{\bf Metric}} & \multirow{-2}{*}{\textbf{Weighted}} & \multirow{-2}{*}{\textbf{Complexity}} \\\midrule
            \multirow{6}{*}{UIG} & has a $k$-clique & $L_2(=L_1)$ & No & $O(n\log n)$\\\cmidrule{2-5}
             & edgeless & $L_2 (=L_1)$ & No & \poly~(LP)\\\cmidrule{2-5}
             & acyclic & $L_2 (=L_1)$ & No & \poly~(LP)\\\cmidrule{2-5}
             & $k$-clique-free & $L_2 (=L_1)$ & No & \poly~(LP)\\\cmidrule{2-5}
             & $k$-connected & $L_2(=L_1)$ & No &  \poly~(LP)\\\cmidrule{1-5}
            IG & complete & $L_2(=L_1)$ & Yes & $O(n)$\\\cmidrule{1-5}
            USG & complete & $L_1$ & Yes & $O(n)$\\
    \end{tabular}
\end{table}

In our results, we use the standard word-RAM model unless explicitly stated otherwise.
This paper focuses on the graph classes of dense graphs.
In particular, we give $O(n)$- and $O(n\log n)$-time algorithms to obtain graphs contained in $\complete$ and $\kclique$, respectively.
However, we also cover classes of sparse graphs by showing linear optimisation formulations.
As two fundamental classes of sparse graphs, we consider edgeless graphs ($\edgeless$) and acyclic graphs ($\acyclic$). 
These classes have also been studied in related work on geometric intersection graphs~\cite{fomin2023,Fomin2025}.

Our work extends the developments presented on geometric graph modification by introducing {\gged}, a model that considers object movement as an edit operation to modify intersection graphs. 
Unlike prior studies that focus on vertex and edge modifications or object scaling, our approach explicitly considers movement costs by quantifying the total moving distance required to obtain a graph in a given class.
Although defining the movement of objects as an edit operation has already been investigated (see again~\cite{fomin2023,Fomin2025}, where the focus is on the parameterised complexity of GMPs), to the best of our knowledge, the study of graph modification as an optimisation problem where the total moving distance is the parameter to minimise remains unexplored.
This approach enables the exploration of new algorithmic and complexity-theoretic questions in the context of graph modification on geometric intersection graphs.
Our research highlights the computational complexity of modifying intersection graphs using geometric-based edit operations, a perspective that is still emerging in the context of graph modification.

\subsubsection{Paper Organisation}

This paper is organised as follows.
\Cref{sec:preliminaries} formally describes the definitions needed to address the above ideas.
In \Cref{sec:algorithm}, 
we show that {\gged} is solvable in (i) $O(n)$ time to render a weighted interval graph complete and (ii) $O(n)$ time to render a weighted unit square graph complete over the $L_{1}$ distance.  
Given a collection of unit intervals, in \Cref{sec:algorithm_kclique} we describe an $O(n \log n)$-time algorithm to obtain a unit interval graph that contains a $k$-clique.
In \Cref{sec:lp}, we give linear programming formulations for obtaining graphs contained in several fundamental graph classes.
As we shall detail, $\acyclic$ is contained in $\nokclique$ when considering interval graphs.
Since the class of forests is a well-known graph class, we still treat them as two different graph classes in \Cref{sec:lp}, although one might argue that this distinction is unnecessary.
Lastly, we present our concluding remarks in \Cref{sec:conclusion}.

\section{Preliminaries}
\label{sec:preliminaries}
    In this section, we give general definitions that are used throughout the paper. 
    We refer to the basic terminology of graphs, linear programming, and convex functions described in the textbooks~\cite{boyd2004,cormen2009,Diestel:BOOK:2010}.

    \paragraph{Objects}
    An \emph{interval} $I$ is a closed line segment of length $\mathrm{len}(I) \in \mathbb{R}^+$ confined to the real line.     
    An interval such that $\mathrm{len}(I) = 1$ is called a \emph{unit interval}.
    The \emph{left endpoint} $\ell(I)$ of an interval $I$ is the point that satisfies $\ell(I) \le y$ for any $y \in I$. 
    Similarly, the \emph{right endpoint} $r(I)$ of $I$ is the point that satisfies $y \le r(I)$ for any $y \in I$.
    The \emph{centre} of $I$ is the point $(r(I) + \ell(I))/2$ and we denote it by $c(I)$. 
    The \emph{left interval set} of a collection of intervals $\I$ is the subcollection of intervals to the `left' of a given point $x$, denoted as $L(\I,x)$.
    That is, $L(\I, x) = \{I \in \I:\: r(I) < x\}$. 
    Similarly, the \emph{right interval set} is defined as $R(\I,x) = \{I\in \I:\:\ell(I) > x\}$.
    Moreover, the \emph{leftmost endpoint} $\ell(\I)$ of a collection of intervals $\I$ is defined as $\min_{I \in \I} \ell(I)$.
    Similarly, the \emph{rightmost endpoint} $r(\I)$ of $\I$ is defined as $\max_{I \in \I} r(I)$.
    The \emph{set of endpoints} $\E(\I)$ of $\I$ is defined as $\E(\I) = \bigcup_{I \in \I} (\ell(I) \cup r(I))$.
    For an arbitrary point $x$, 
    we define two subsets of intervals $\E_{\ell}(\I,x) = \left\{I \in \I:\: \ell(I) = x\right\}$ and
    $\E_r(\I,x) = \left\{I \in \I:\: r(I) = x\right\}$ 
    to represent the subcollection of intervals of $\I$ having $x$ as the left and right endpoint, respectively.
    A \emph{unit square} $S$ is a regular quadrilateral where each side has length $1$, and its centre is positioned at $(s_x,s_y) \in \mathbb{R}^2$.
    
    \paragraph{Graphs} An \emph{edgeless graph} is a graph $G = (V, E)$ such that $E = \emptyset$.
    A \emph{complete graph} is a graph $G = (V, E)$ such that $E = \binom{V}{2}$.
    A $k$\emph{-clique} of a graph $G = (V, E)$ is a subset $W\subseteq V$ such that $|W| = k$ and for all pairs of different vertices $u,v \in W$, $\{u,v\} \in E$.
    If such $W$ exists in $V$, we say that $G$ \emph{contains a $k$-clique}.
    For $k\in \mathbb{N}$, a \emph{$k$-connected graph} is a graph $G = (V,E)$ if $\size{G} > k$ and $G$ remains connected even if an arbitrary $X \subseteq V$ with $\size{X} < k$ is removed from $V$.

    \subparagraph{Intersection Graphs} Let $\S$ be a collection of geometric objects in $\mathbb{R}^d$ (non-empty subsets of $\mathbb{R}^d$).
    For any $S,S' \in \S$, we say that $S$ and $S'$ \emph{intersect} when $S \cap S' \neq \emptyset$.
    The \emph{intersection graph} of a collection of geometric objects $\S$, denoted by $G(\S)$, is the graph such that $G(\S)$ has a unique vertex of every object in $\S$ and there is an edge between two vertices if and only if their corresponding objects intersect.
    When $\S$ is a collection of (unit) intervals, $G(\S)$ is called \emph{(unit) interval graph}.
    Similarly, $G(\S)$ is called \emph{unit square graph} when $\S$ is a collection of unit squares.

    \subparagraph{Graph Classes} An (infinite) set of graphs $\Pi$ is a \emph{graph class} (or simply a class), and we say that \emph{$G$ is contained in $\Pi$} if $G \in \Pi$. 
    In this paper, we deal with the following classes.
    (i)   $\complete = \{G :G\text{ is a complete graph.}\}$,
    (ii)  $\edgeless = \{G :G\text{ is an edgeless graph.}\}$,
    (iii) $\acyclic = \{G :G\text{ is an acyclic graph.}\}$,  
    (iv)  $\kclique = \{G :G\text{ contains a $k$-clique.}\}$,
    (v)   $\nokclique = \{G :G \not\in \kclique\}$, and
    (vi)  $\kconnected = \{G :G\text{ is a $k$-connected graph.}\}$.
    In the literature, the term \emph{graph property} is also used to refer to graph classes. However, we stick to this term later in the paper.

    \paragraph{Linear Optimisation}
    A \emph{linear function} $f$ for a given set of real numbers $ a_1,\allowbreak a_2,\ldots,a_n$ and a set of variables $x_1,x_2,\ldots,x_n$ is defined as $f(x_1,\ldots,x_n) = \sum^{n}_{j=1} a_jx_j$. 
    Similarly, a \emph{linear inequality} is an equation of the form $\allowbreak f(x_1,\ldots,x_n) \leq b$ and $f(x_1,\ldots,x_n) \geq b$. 
    We use linear inequalities to denote \emph{linear constraints}. 
    A \emph{linear programming problem} is the problem of either minimising or maximising a linear function subject to a finite set of linear constraints. 
    We refer in this paper to a \emph{linear program} as an instance of a linear programming problem described as:
    \begin{equation}
        \begin{array}{llll@{}l}
        \text{minimise}  & f(x_1,x_2,\ldots,x_n) &\\
        \text{subject to}& \C\\
        \end{array},
        \label{def:linear_program}
    \end{equation}
    where $\C$ is a finite collection of linear constraints. 
    We avoid expressly showing linear programs in their standard form to maintain legibility in the present work. The linear function $f(x_1,x_2,\ldots,x_n)$ in (\ref{def:linear_program}) is called \emph{objective function}. 
    An $n$-vector $\bm x$ representing an input of $f$ is called a \emph{feasible solution} if it satisfies all the constraints contained in $\C$. 
    A feasible solution $\bm x$ has an \emph{objective value} $f(\bm x)$. 
    If the objective value $f(\bm x)$ is minimum (maximum) among all feasible solutions, then $\bm x$ is an \emph{optimal solution} of the linear program. 

    \paragraph{Convexity}
    A set $C \subseteq \mathbb{R}^n$ is \emph{convex} if the line segment between two arbitrary points in $C$ lies entirely in $C$. Such a set is called a \emph{convex set}. A function $f:\mathbb{R}^n \rightarrow \mathbb{R}$ is said to be \emph{convex} if its domain $\mathrm{dom} f$ is a convex set and if for all $x,y \in \mathrm{dom} f$ and $\theta$ such that $0 \leq \theta \leq 1$, the inequality $f(\theta x + (1-\theta)y) \leq \theta f(x) + (1-\theta)f(y)$ holds. 
    A function satisfying the above is called a \emph{convex function}. The \emph{graph} of a function $f$ is the set $\{ (x,f(x))\:|\: x \in \mathrm{dom} f\}$. 
    We mainly refer to the geometric interpretation of the above inequality, which says that if $f$ is convex, then an arbitrary line segment from point $(x,f(x))$ to point $(y,f(y))$ lies above the graph of $f$. 

    The \emph{$L_m$ distance} for an $m\ge 1$ defines the distance between two points $p = (p_1,\ldots,p_d)$ and $q = (q_1,\ldots,q_d)$ in $\mathbb{R}^d$ as $\lVert p,q\rVert_m = \sqrt[m]{(p_1-q_1)^m+\cdots + (p_d - q_d)^m}$.
    In all subsequent sections, we use the $L_2$ distance (also known as the Euclidean distance) and the $L_1$ distance (also known as the Manhattan distance).
    Unless stated otherwise, all intersection graphs are assumed to be \emph{unweighted}. A \emph{weighted intersection graph} is an intersection graph in which each object has a multiplicative weight associated with its moving distance function, referred to as \emph{distance weight}. The distance weight is formally defined in subsequent sections when required.

\section{Linear-time Algorithms for \texorpdfstring{$\Pi_{\texttt{comp}}$}{}}
\label{sec:algorithm}
    This section presents two linear-time algorithms for $\complete$ on weighted interval graphs, and weighted unit square graphs over the $L_{1}$ distance.
    We remark that our algorithms run even if the given collections of intervals and unit squares are not sorted.
    First, we give a linear-time algorithm for interval graphs.
\subsection{Interval Graphs}
    Let $G(\I)$ be the weighted interval graph of a collection $\I$ of $n$ intervals.
    We give a $O(n)$-time algorithm for moving intervals of $\I$ such that $G(\I) \in \complete$.
    Our algorithm is based on convex optimisation. 
    For an arbitrary point $x$,
    the minimal movement that ensures that all intervals contain $x$ is uniquely determined.
    In particular, the intervals in $L(\I,x)$ have to be moved to the right and the intervals in $R(\I,x)$ have to be moved to the left.    
    We define this movement as a function that returns the total moving distance of intervals and call it \emph{moving distance function} (see \Cref{def:moving_distance} below).
    We first show that this function is convex and then show that it can be minimised using a binary search approach. 
    A point $x$ that is an argument of the function is called a \emph{gathering point} to emphasise the described movement of intervals to the point. 
    Moreover, when $x$ minimises the function, we call it an \emph{optimal gathering point}.
    
    The moving distance function is defined as follows.
    \begin{definition}\label{def:moving_distance}
        The \emph{moving distance} of an interval $I \in \I$ to a point $x$ is a function $d_{I}:\mathbb{R} \rightarrow \mathbb{R}$ defined as:
        \begin{gather}
            d_{I}(x) = \begin{cases}
                w_I(x - c(I) - \frac{\mathrm{len}(I)}{2}),\quad x-\frac{\mathrm{len}(I)}{2}-c(I) > 0\\
                w_I(c(I) - x - \frac{\mathrm{len}(I)}{2}),\quad x+\frac{\mathrm{len}(I)}{2}-c(I) < 0\\
                0,\quad \text{otherwise}
            \end{cases},
            \label{equ:moving_distance}
        \end{gather}
        where $w_I \in \mathbb{R}^+$ is an arbitrary constant describing the distance weight. The \emph{total moving distance} of a collection of intervals is then defined as: $D(\I,x) = \sum_{I\in\I} d_I(x)$.
    \end{definition}

    \Cref{fig:md_representation} illustrates the curve of an arbitrary moving distance function. Intuitively, if the interval $I$ is moved from the left to $x$, its moving distance is given by $w_I\left(x - c(I) - \mathrm{len}(I)/{2}\right)$; if it is moved to $x$ from the right, its moving distance is given by $w_I\left(c(I) - x -\mathrm{len}(I)/2\right)$.
    In all other cases, the interval intersects $x$, and hence its moving distance is $0$.

    \begin{figure}[t]
        \centering
        \includegraphics[scale=1,page=1]{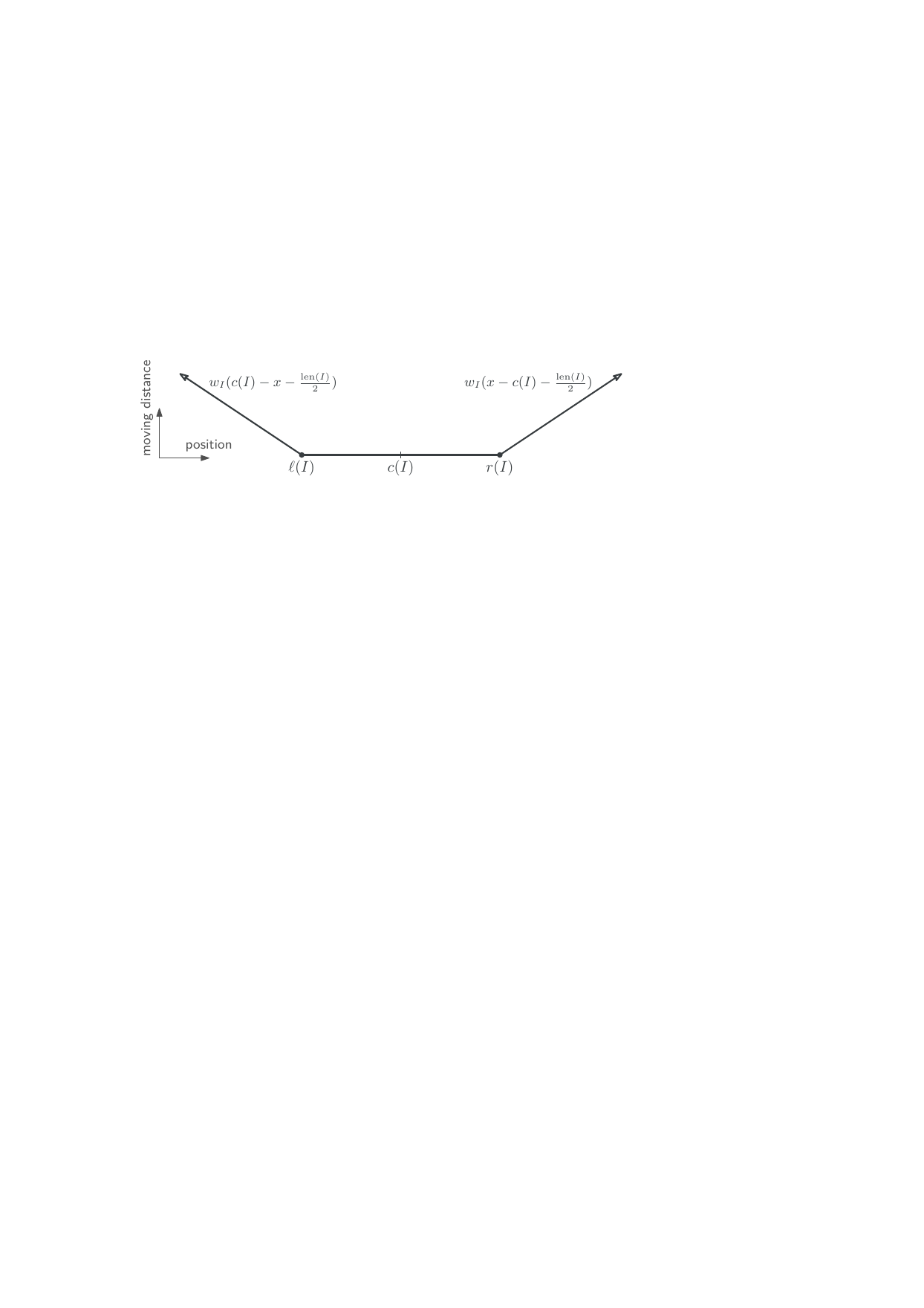}
        \caption{Curve of the moving distance function of an arbitrary interval $I$ with centre $c(I)$.}
        \label{fig:md_representation}
    \end{figure}  
    
    We state the following property of the above function. 
    \begin{lemma}
        The moving distance $d_I(x)$ is a convex function.
        \label{lem:md_convex}
    \end{lemma}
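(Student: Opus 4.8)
The plan is to recognise $d_I$ as (a nonnegative multiple of) the distance from $x$ to the interval $I$, and then invoke the standard fact that a pointwise maximum of affine functions is convex. Writing $\ell(I) = c(I) - \mathrm{len}(I)/2$ and $r(I) = c(I) + \mathrm{len}(I)/2$ for the endpoints, the first step is to check that the three cases in \eqref{equ:moving_distance} collapse into the single closed form
\[
    d_I(x) \;=\; m_I \cdot \max\bigl\{\, x - r(I),\ \ \ell(I) - x,\ \ 0 \,\bigr\}.
\]
This is a short case analysis on the position of $x$: if $x > r(I)$ the first argument is positive and dominates, giving $m_I(x - c(I) - \mathrm{len}(I)/2)$; if $x < \ell(I)$ the second argument dominates, giving $m_I(c(I) - x - \mathrm{len}(I)/2)$; and if $\ell(I) \le x \le r(I)$ the first two arguments are nonpositive, so the maximum is the constant $0$. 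In all three cases this matches the piecewise definition.

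Given this reformulation, convexity follows in two elementary steps. Each of the maps $x \mapsto x - r(I)$, $x \mapsto \ell(I) - x$, and the constant $x \mapsto 0$ is affine, hence convex. The pointwise maximum of finitely many convex functions is convex — equivalently, the epigraph of the maximum is the intersection of the three individual epigraphs, each a convex set by the epigraph characterisation of convexity recalled in \Cref{sec:preliminaries}, and an intersection of convex sets is convex. Finally, multiplying a convex function by the nonnegative constant $m_I \in \mathbb{R}^+$ again yields a convex function, so $d_I$ is convex.

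If one prefers to avoid the reformulation, the same conclusion can be obtained by verifying the defining inequality $d_I(\theta x + (1-\theta)y) \le \theta\, d_I(x) + (1-\theta)\, d_I(y)$ for $0 \le \theta \le 1$ directly; since $d_I$ is piecewise linear with three pieces, this splits into finitely many cases according to which pieces contain $x$, $y$, and $\theta x + (1-\theta)y$, using in the ``mixed'' cases that the value on the middle piece is $0$ and hence bounded above by the values on the outer pieces. The only real obstacle in either route is the bookkeeping of the case distinction; the max-of-affine-functions argument keeps this to a minimum, so that is the version I would write up.
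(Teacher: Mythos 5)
Your proof is correct, and it takes a genuinely different route from the paper. You rewrite $d_I(x) = m_I\max\{x - r(I),\ \ell(I)-x,\ 0\}$ (your case check that this collapses the three branches of the definition is right, with $\ell(I) = c(I)-\mathrm{len}(I)/2$ and $r(I)=c(I)+\mathrm{len}(I)/2$) and then invoke closure properties: each of the three arguments is affine hence convex, the pointwise maximum is convex because its epigraph is the intersection of the three convex epigraphs, and scaling by $m_I>0$ preserves convexity. The paper instead argues directly on the epigraph of $d_I$: it takes two arbitrary points of $\mathrm{epi}\, d_I$, parameterises the segment between them, and verifies membership in the epigraph through an explicit case analysis on where the endpoints and the intermediate point fall relative to $\ell(I)$ and $r(I)$, including several nested subcases with inequality manipulations. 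Your max-of-affine-functions argument buys brevity and robustness — the only thing to verify is the three-line reformulation, and the convexity machinery is standard and already recalled in the preliminaries — whereas the paper's argument is more self-contained at the level of first principles but pays for it with lengthy and error-prone bookkeeping. Either is acceptable; yours is the cleaner write-up, and your closing remark that the direct verification of $d_I(\theta x + (1-\theta)y) \le \theta d_I(x) + (1-\theta) d_I(y)$ would also work is essentially a compressed description of what the paper actually does.
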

    \begin{proof}
        Notice that $d_I(x)$ is a piecewise linear function. Moreover, $ w_I(x - c(I) - \mathrm{len}(I)/2)$ and $w_I(c(I) - x - \mathrm{len}(I)/2)$ describe increasing and decreasing linear functions with zero at $r(I)= c(I)+\mathrm{len}(I)/2$ and $\ell(I) =c(I) - \mathrm{len}(I)/2$, respectively. Thus, $d_I(x)$ is equivalent to the function $f$ defined as $f(x) = \max\{f_1(x), f_2(x), f_3(x)\}$, where $f_1(x) = w_I(x - c(I) - \mathrm{len}(I)/2)$, $f_2(x) = w_I(c(I) - x - \mathrm{len}(I)/2)$ and $f_3(x) = 0$. The domain $\mathrm{dom}\: f = \mathbb{R}$ is a convex set. We show that $f(x)$ can be bounded as follows:
        \begin{align*}
            f(\theta x + (1-\theta)y) & = \max\{f_1(\theta x + (1-\theta)y),f_2(\theta x + (1-\theta)y),\\
            &\quad\quad\quad\quad f_3(\theta x + (1-\theta)y)\}\\
            & \le \max\{\theta f_1(x)+(1-\theta)f_1(y),\theta f_2(x)+(1-\theta)f_2(y),\\
            &\quad\quad\quad\quad \theta f_3(x)+(1-\theta)f_3(y)\}\\
            & \le \theta\max\{f_1(x),f_2(x),f_3(x)\}+(1-\theta)\max\{f_1(y),f_2(y),f_3(y)\}\\
            & = \theta f(x) + (1-\theta) f(y).
        \end{align*}
        Consequently, it holds that $f(x)$ is convex. Therefore, $d_I(x)$ is also convex.
    \end{proof}

    \Cref{lem:md_convex} implies the following corollary, as the total moving distance function is just a sum of convex functions. This corollary is used later in the algorithm. 
    
    \begin{corollary}
        The total moving distance function $D(\I,x)$ is convex and piecewise linear.
        In other words, the slope of linear functions that define $D(\I, x)$ is non-decreasing.
        \label{cor:d_convex}
    \end{corollary}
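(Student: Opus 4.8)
The plan is to derive the corollary directly from \Cref{lem:md_convex}, using two standard facts about convex functions: a nonnegative sum of convex functions is convex, and a finite sum of convex piecewise-linear functions is convex piecewise-linear with slopes that can only increase as $x$ grows. First I would note that $D(\mathcal I, x) = \sum_{I \in \mathcal I} d_I(x)$ is a finite sum, so convexity of each summand (\Cref{lem:md_convex}) immediately gives convexity of $D$: for $\theta \in [0,1]$ and any $x, y$, summing the defining inequality $d_I(\theta x + (1-\theta)y) \le \theta d_I(x) + (1-\theta) d_I(y)$ over all $I$ yields the same inequality for $D$, since the inequality is preserved under addition. Equivalently, one can argue via epigraphs: the epigraph of a sum of convex functions is obtained by intersecting/adding the convex epigraphs of the summands, hence is convex.

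Next I would establish the piecewise-linear structure. Each $d_I$ is piecewise linear with exactly two breakpoints, namely $c(I) - \len{I}/2 = \ell(I)$ and $c(I) + \len{I}/2 = r(I)$, i.e. the two endpoints of $I$; on $(-\infty, \ell(I))$ it has slope $-m_I$, on $[\ell(I), r(I)]$ it has slope $0$, and on $(r(I), \infty)$ it has slope $m_I$. A finite sum of such functions is again piecewise linear, with the set of breakpoints contained in $E(\mathcal I) = \bigcup_I \{\ell(I), r(I)\}$, and on each maximal interval between consecutive breakpoints the slope of $D$ is the sum of the slopes of the $d_I$ on that interval. This already shows $D(\mathcal I, x)$ is piecewise linear. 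Observe, in passing, that this is exactly why it will suffice later to search for an optimal gathering point among the points of $E(\mathcal I)$: a convex piecewise-linear function attains its minimum at a breakpoint (or on a flat segment bounded by breakpoints).

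For the ``non-decreasing slope'' claim, I would invoke the general characterization that a piecewise-linear function is convex if and only if its slopes are non-decreasing from left to right; combined with the convexity of $D$ already proved, this yields the statement. Alternatively, and perhaps more self-containedly, I would argue monotonicity of slopes directly: as $x$ increases past a breakpoint $p \in E(\mathcal I)$, each $d_I$ whose breakpoint is $p$ has its local slope change by $+m_I > 0$ (from $-m_I$ to $0$ at $p = \ell(I)$, or from $0$ to $m_I$ at $p = r(I)$), while every other $d_I$ keeps the same slope across $p$; hence the slope of $D$ strictly increases across each breakpoint and is constant between breakpoints, so it is non-decreasing overall. I do not expect a genuine obstacle here — the only mild care needed is bookkeeping when several intervals share an endpoint (so a breakpoint of $D$ may come from several $d_I$ simultaneously, and the slope jump is the sum of the corresponding $m_I$), and being precise that ``slope is non-decreasing'' is the correct reformulation of convexity for the piecewise-linear setting, which is what the corollary explicitly spells out.
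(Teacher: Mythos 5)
Your proposal is correct and takes essentially the same route as the paper, which states this as an immediate consequence of \Cref{lem:md_convex} without a separate proof: each $d_I$ is convex and piecewise linear with slope $-m_I$, $0$, $m_I$ separated by the breakpoints $\ell(I)$ and $r(I)$, so the finite sum $D(\mathcal I,x)$ is convex and piecewise linear with breakpoints in $E(\mathcal I)$ and slope jumps equal to the (positive) sum of the relevant $m_I$, which is exactly the non-decreasing-slope reformulation. Only your parenthetical claim that the epigraph of a sum is obtained by ``intersecting/adding'' the epigraphs of the summands is imprecise, but it is an aside and your main argument via summing the convexity inequalities is sound.
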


    In what follows, we define a total moving distance function $D(\I, x)$ as a sequence of linear functions $D_1(\I, x), \ldots, D_{2\size{\I} + 1}(\I, x)$. 
    For $1 \le i < j \le 2\size{\I} + 1$, the slope of $D_i(\I, x)$ is smaller than the slope of $D_j(\I, x)$.
    Since the total moving distance function is convex, optimal gathering points are located within a range $e_l \leq x \leq e_r$, where $e_l, e_r \in \mathbb{R}$ are also optimal gathering points. 
    In the following paragraphs, $e_l$ is the \emph{leftmost optimal gathering point} and $e_r$ is the \emph{rightmost optimal gathering point}.
    Formally, $e_l$ and $e_r$ are the points that satisfy $e_l \leq x$ and $x \leq e_r$, respectively, for any optimal gathering point $x$.
    Next, we narrow the range in which an optimal gathering point can be found.

    \begin{lemma}\label{lem:opt}
        Let $D(\I, x)$ be a total moving distance function given by linear functions $D_1(\I, x), \ldots,\allowbreak D_{2\size{\I}+1}(\I, x)$,
        $\alpha$ be an integer that satisfies the slope of $D_\alpha(\I, x)$ is non-positive and $D_{\alpha + 1}(\I, x)$ is non-negative, and $y$ be a real number in the range of $D_\alpha(\I, x)$.
        Then, $y$ is an optimal gathering point.
    \end{lemma}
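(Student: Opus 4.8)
The plan is to deduce everything from \Cref{cor:d_convex}. Write $D(\mathcal I,\cdot)$ through its pieces $D_1(\mathcal I,\cdot),\dots,D_{2\size{\mathcal I}+1}(\mathcal I,\cdot)$, let $s_i$ denote the slope of $D_i(\mathcal I,\cdot)$, and let $R_i$ denote the closed interval of arguments on which $D(\mathcal I,\cdot)$ coincides with $D_i(\mathcal I,\cdot)$. By the corollary the slopes are non-decreasing, $s_1\le\dots\le s_{2\size{\mathcal I}+1}$, the $R_i$ cover $\mathbb R$ and consecutive ones share only the single point $\max R_i=\min R_{i+1}$; that point is a breakpoint of $D(\mathcal I,\cdot)$, hence a breakpoint of some $d_I$, hence one of $\ell(I)$ or $r(I)$, so it lies in $E(\mathcal I)$.

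First I would strengthen the hypothesis $s_\alpha\le 0\le s_{\alpha+1}$, using monotonicity of the slopes, to ``$s_i\le 0$ for all $i\le\alpha$ and $s_j\ge 0$ for all $j\ge\alpha+1$''. Since $D(\mathcal I,\cdot)$ is continuous (a finite sum of the $d_I$) and piecewise linear with those slopes, gluing the pieces across their shared endpoints shows that $D(\mathcal I,\cdot)$ is non-increasing on $\bigcup_{i\le\alpha}R_i=(-\infty,t]$ and non-decreasing on $\bigcup_{j\ge\alpha+1}R_j=[t,\infty)$, where $t:=\max R_\alpha=\min R_{\alpha+1}$. Hence $D(\mathcal I,z)\ge D(\mathcal I,t)$ for every $z\in\mathbb R$, so $t$ is an optimal gathering point. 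Moreover the two extreme pieces have slopes $-\sum_I m_I<0$ and $\sum_I m_I>0$, so $1\le\alpha\le 2\size{\mathcal I}$ and $t=x^{\mathcal I}_{\alpha}\in E(\mathcal I)$; this incidentally settles the earlier remark that $E(\mathcal I)$ contains an optimal gathering point.

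Finally I would transfer optimality from $t$ to an arbitrary $y$ in the range of $D_\alpha(\mathcal I,\cdot)$. For $y\in R_\alpha$ we have $y\le t$, so $D(\mathcal I,y)\ge D(\mathcal I,t)$ by the first half; and since $D(\mathcal I,\cdot)$ restricted to $R_\alpha$ has the single slope $s_\alpha\le 0$, when $s_\alpha=0$ it is constant on $R_\alpha$, giving $D(\mathcal I,y)=D(\mathcal I,t)$ for every $y\in R_\alpha$, while when $s_\alpha<0$ the point of $R_\alpha$ realising the minimum is its right endpoint $t$. Either way the point the statement singles out minimises $D(\mathcal I,\cdot)$, i.e.\ is an optimal gathering point. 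I do not expect a real obstacle here once \Cref{cor:d_convex} is available: it is essentially a monotonicity argument, and the only points needing care are gluing the per-piece slopes into global monotonicity of $D(\mathcal I,\cdot)$ through the breakpoints and the short case split on whether $D_\alpha$ is flat.
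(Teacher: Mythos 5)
Your argument is correct and is essentially the paper's own proof: both rely on \Cref{cor:d_convex} to get non-decreasing slopes, hence that $D(\mathcal I,\cdot)$ is non-increasing to the left of the breakpoint between the $\alpha$-th and $(\alpha+1)$-th pieces and non-decreasing to its right, which gives optimality (the existence of $\alpha$ from the strictly negative first slope and strictly positive last slope is also the same observation). If anything you are more careful than the paper: your explicit case split on $s_\alpha=0$ versus $s_\alpha<0$ pinpoints that when $s_\alpha<0$ only the right endpoint of the $\alpha$-th piece minimises $D(\mathcal I,\cdot)$, so the lemma's ``any $y$ in the range of $D_\alpha$'' is really meant for the flat piece (as in the application in \Cref{thm:ogp_median}); the paper's proof glosses over this when it asserts $D(\mathcal I,y')\ge D(\mathcal I,y)$ for every $y'>y$.
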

    \begin{proof}
        We first show that there exists such $\alpha$.
        From the definition of the total moving distance function, the slope of $D_1(\I, x)$ is negative, and the slope of $D_{2\size{\I} + 1}(\I, x)$ is positive.
        Thus such $\alpha$ exists.

        We show that $y$ is an optimal gathering point.
        Let $y'$ be a distinct point from $y$.
        If $y' < y$, $D(\I, y') \ge D(\I, y)$ holds, as the slope of functions is non-decreasing by \Cref{cor:d_convex}.
        If $y' > y$, then $D(\I, y') \ge D(\I, y)$ since the slope of $D_{\alpha + 1}(\I, y)$ is non-negative and the slope of the functions is non-decreasing by \Cref{cor:d_convex}.
        Therefore, $y$ is an optimal gathering point.        
    \end{proof}
    
    The above lemma implies the following corollary.
        
    \begin{corollary}
        There is an optimal gathering point $x$ such that $x$ is an endpoint of an interval in $\I$.
        \label{lem:opt_is_endpoint}
    \end{corollary}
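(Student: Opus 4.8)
The plan is to derive this corollary almost immediately from \Cref{lem:opt}, the only extra ingredient being the observation that the breakpoints of the total moving distance function all lie in $E(\mathcal I)$.

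First I would locate the breakpoints of $D(\mathcal I,x)$. By \Cref{cor:d_convex} the function $D(\mathcal I,x)=\sum_{I\in\mathcal I}d_I(x)$ is convex and piecewise linear. Inspecting the definition in~\eqref{equ:moving_distance}, each summand $d_I$ is linear on each of the three pieces $(-\infty,\,c(I)-\mathrm{len}(I)/2]$, $[c(I)-\mathrm{len}(I)/2,\,c(I)+\mathrm{len}(I)/2]$ and $[c(I)+\mathrm{len}(I)/2,\,\infty)$; since $c(I)-\mathrm{len}(I)/2=\ell(I)$ and $c(I)+\mathrm{len}(I)/2=r(I)$, the only points where $d_I$ changes slope are $\ell(I)$ and $r(I)$. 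Hence $D(\mathcal I,x)$ is linear on every maximal subinterval of $\mathbb R$ disjoint from $E(\mathcal I)$, so each breakpoint separating two consecutive linear pieces $D_i(\mathcal I,x)$ and $D_{i+1}(\mathcal I,x)$ belongs to $E(\mathcal I)$.

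Next I would apply \Cref{lem:opt}. It gives an index $\alpha$ for which the slope of $D_\alpha(\mathcal I,x)$ is non-positive and that of $D_{\alpha+1}(\mathcal I,x)$ is non-negative, and it asserts that every point of the domain of the piece $D_\alpha(\mathcal I,x)$ is an optimal gathering point. Since the last piece $D_{2\size{\mathcal I}+1}(\mathcal I,x)$ has positive slope, we have $\alpha\le 2\size{\mathcal I}$, so the piece $D_{\alpha+1}(\mathcal I,x)$ exists; let $b$ be the right endpoint of the domain of $D_\alpha(\mathcal I,x)$. Then $b$ is finite and is exactly the breakpoint between $D_\alpha$ and $D_{\alpha+1}$, so $b\in E(\mathcal I)$ by the previous paragraph. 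Invoking \Cref{lem:opt} with $y=b$ shows that $b$ is an optimal gathering point, and since $b\in E(\mathcal I)$ it is an endpoint of some interval of $\mathcal I$, as required.

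There is no genuine obstacle here; the statement is essentially a repackaging of \Cref{lem:opt}. The only things needing care are the two remarks above: that $d_I$ (and hence $D$) has no breakpoint away from interval endpoints, which is immediate from~\eqref{equ:moving_distance}; and the mild edge case $\alpha=1$, where $D_\alpha$ is the unbounded leftmost piece, but even then its right endpoint is the leftmost endpoint $\ell(\mathcal I)\in E(\mathcal I)$, so the argument is unchanged.
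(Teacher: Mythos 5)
Your proof is correct and follows essentially the same route as the paper, which treats the corollary as an immediate consequence of \Cref{lem:opt} together with the (implicit) fact that the linear pieces of $D(\mathcal I,x)$ are delimited exactly by the points of $E(\mathcal I)$. Your write-up merely makes explicit what the paper leaves implicit: the breakpoint bounding the non-positive-slope piece $D_\alpha$ is an interval endpoint and, by \Cref{lem:opt}, an optimal gathering point.
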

    
    \paragraph{Algorithm Overview} We are now ready to describe the linear-time algorithm to obtain an optimal gathering point $x$, described in \Cref{alg:complete}.
    Our algorithm is based on a binary search approach over the set of endpoints.
    From the convexity of the total moving distance function and \Cref{lem:opt_is_endpoint}, 
    $x$ must be between the leftmost and rightmost endpoint of $\I$, 
    namely $\ell = \ell(\I)$ and $r = r(\I)$, respectively. 
    We initially set the point $x$ to the median point of $\E(\I)$. 
    Next, we iterate over $\I$ to obtain the points $x^{-} = r(L(\I, x))$ and $x^{+} =  \ell(R(\I, x))$.
    We then calculate $D = D(\I,x)$, $D^{-} = D(\I, x^{-})$ and $D^{+} =D(\I, x^{+})$ and compare their values.
    We identify three cases.
    If $D \le D^{-},D^{+}$, then $x$ is an optimal gathering point of $\I$ and we return it as the solution.
    If $D^{-} < D$, then the optimal gathering point is between $\ell$ and $x^{-}$ and we replace $r$ by $x^{-}$.
    Lastly, if $D^{+} < D$, then the optimal gathering point is between $x^{+}$ and $r$ and we replace $\ell$ by $x^{+}$. 
    In the last two cases, we recursively compare the total moving distances for the median $y$ of endpoints between $\ell$ and $r$ as described above.
    The recursion ends when a point satisfying the first case is found.
    
    \setlength{\intextsep}{1\baselineskip}
    \DontPrintSemicolon
    \begin{algorithm}[tb]
        \caption{A linear-time algorithm for obtaining a graph in $\Pi = \complete$.}
        \label{alg:complete}
        \Procedure{\rm{FindingOptimalGatheringPoint}($\I$)}{
            $\ell \gets \ell(\I)$ and $r \gets r(\I)$.\;
            $x$ is the median of $\E(\I)$, $x^{-} \gets r(L(\I,x))$ and $x^{+} \gets \ell(R(\I,x))$.\;
            $D \gets D(\I,x)$, $D^- \gets D(\I,x^{-})$, $D^{+} \gets D(\I,x^{+})$\;
            $W_{L} \gets \sum_{I\in L(\I,x^{+})} w_I$, $W_{R} \gets \sum_{I\in R(\I,x^{-})} w_I$\;
           \While{$D > D^{-} \lor D > D^{+}$\label{alg:while}}{
                $\I \gets \I \setminus \set{I\in \I\mid \ell(I) \le \ell \land  r\le r(I)}$\;\label{alg:remove_itvs_too_big}
                \If{$D^{-} < D$}{
                    $r \gets x^{-}$, $\I \gets \I \setminus R(\I,x^{-})$\;\label{alg:disc_right}
                    $y$ is the median of $\{e \in \E(\I) :\: \ell\leq e \leq r\}$\;\label{alg:disc_right_median}
                    $y^{-} \gets r(L(\I,y))$ and $y^{+} \gets \ell(R(\I,y))$\;\label{alg:disc_right_median_l_r}
                    $D \gets D(\I,y) +(x^{-}-y)W_R$, $D^{-} \gets D(\I,y^{-}) +(x^{-}-y^{-})W_R$, $D^{+} \gets D(\I,y^{+}) +(x^{-}-y^{+})W_R$\;\label{alg:disc_right_mds}
                    $W_R \gets W_R + \sum_{I\in R(\I,y^{-})} w_I$\label{alg:disc_right_weights}
                }
                \If{$D^{+}<D$}{
                    $\ell \gets x^{+}$, $\I \gets \I \setminus L(\I,x^{+})$\;\label{alg:left}
                    $y$ is the median of $\set{e\in \E(\I):\: \ell\le e \le r}$\;
                    $D \gets D(\I,y) + (y-x^{+})W_L$, $D^{-} \gets D(\I,y^{-}) + (y^{-}-x^{+})W_L$, $D^{+} \gets D(\I,y^{+})+(y^{+}-x^{+})W_L$\;
                    $W_L \gets W_L + \sum_{I\in L(\I,y^{+})} w_I$
                }
            }
            \Return $x$\;
        }
    \end{algorithm}
     
    We now show the correctness of this algorithm.
    We claim that an optimal gathering point is found by repeating the above procedure.
    To show the correctness of \Cref{alg:complete},
    we give a characterisation of the optimal gathering points.

    By \Cref{lem:opt_is_endpoint} and \Cref{lem:opt},
    we claim that we can find an optimal gathering point by checking the endpoints of the intervals and their adjacent endpoints.
    We show that \Cref{alg:complete} finds such a point.
    To show this, we use \Cref{lem:shift_further_left_right} in subsequent lemmas (see also \Cref{fig:shift_further_left_right}):

    \begin{lemma}\label{lem:shift_further_left_right}
        Let $x \in \mathbb{R}$. 
        If $x\le \ell(\I)$, then it holds that $D(\I,y) = D(\I,x) + (x-y)\sum_{I\in \I} w_I$ for any $y \le x$.
        If instead $r(\I) \le x$, then $D(\I,y) = D(\I,x) + (y-x)\sum_{I\in \I} w_I$ holds for any $x\le y$.
    \end{lemma}
    \begin{proof}
        Assume that $x\le \ell(\I)$. For any $I \in \I$, we have that:
        \begin{align*}
            d_I(y) & = w_I(c(I) - y - \len{I}/2)\\
            & = w_I(c(I) - (x - (x-y)) - \len{I}/2) = w_I(c(I) - x + (x-y)  - \len{I}/2) \\
            & = w_I(c(I) - x -\len{I}/2) + w_I(x-y) = d_I(x) + w_I(x-y).
        \end{align*}
        Hence it follows that $D(\I,y) = D(\I,x) + (x-y)\sum_{I\in \I} w_I$ since $x \le \ell(\I)$. Similarly, if $r(\I) \le x$, we have that:
            \begin{align*}
                d_I(y) & = w_I(y - c(I) - \len{I}/2)\\
                & = w_I((x + (y-x)) -c(I) - \len{I}/2) = w_I(x + (y-x) - c(I) - \len{I}/2) \\
                & = w_I(x- c(I) -\len{I}/2) + w_I(y-x) = d_I(x) + w_I(y-x),
            \end{align*}
        for any $I\in \I$. Hence we conclude that $D(\I,y) = D(\I,x) + (y-x)\sum_{I\in \I} w_I$ since $r(\I) \le x$. 
        Therefore the lemma statement is true.
    \end{proof}

    \begin{figure}[t]
        \centering
        \includegraphics[scale=1,page=1]{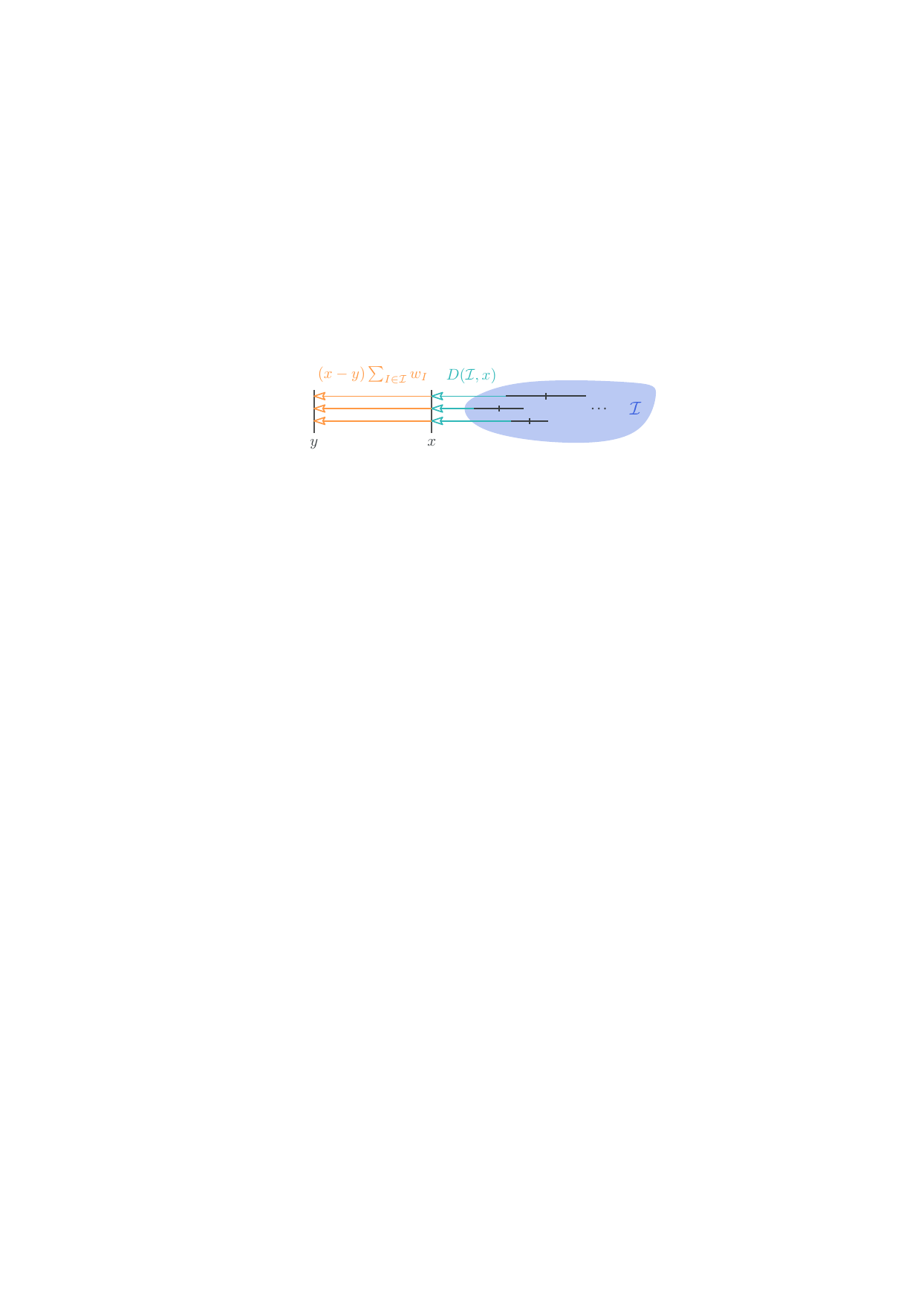}
        \caption{Illustration of \Cref{lem:shift_further_left_right}: The total moving distance of $\I$ to $y$ can be described as the total moving distance of $\I$ to $x$ plus the moving distance of the shifting from $x$ to $y$.}
        \label{fig:shift_further_left_right}
    \end{figure}  

    \begin{lemma}\label{lem:alg:opt}
        Let $x$ be a point found by \Cref{alg:complete}, 
        $x^{-}$ be the endpoint $r(L(\I, x))$ and $x^{+}$ be the endpoint $\ell(R(\I, x))$.
        Then $D(\I, x) \le D(\I, x^{+})$ and $D(\I, x) \le D(\I, x^{-})$ hold. That is, $x$ is an optimal gathering point.
    \end{lemma}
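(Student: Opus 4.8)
The plan is to separate the cheap part of the claim from the real one. With $\ell = r(L(\mathcal I, x))$ and $r = \ell(R(\mathcal I, x))$ as in the statement, the two inequalities $D(\mathcal I,x)\le D(\mathcal I,\ell)$ and $D(\mathcal I,x)\le D(\mathcal I,r)$ are exactly the negation of the guard of the \textbf{while} loop in \Cref{alg:complete}, so they hold simply because the algorithm returned $x$. (One should fix the conventions $D(\mathcal I,\ell(R(\mathcal I,x)))=+\infty$ when $R(\mathcal I,x)=\emptyset$, and symmetrically for $L(\mathcal I,x)=\emptyset$; this is harmless, since on the corresponding side $D(\mathcal I,\cdot)$ is then monotone by \Cref{cor:d_convex}.) So the substance of the lemma lives entirely in the clause ``that is, $x$ is an optimal gathering point,'' and that is where I would spend the proof.

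First I would show that \Cref{alg:complete} keeps, as a loop invariant, that its current search interval contains every optimal gathering point; writing $e_l\le e_r$ for the leftmost and rightmost optimal gathering points and $[a,b]$ for the interval held in the variables called $\ell,r$ in the pseudocode (renamed here to avoid a clash with the $\ell,r$ of the statement), the invariant is $a\le e_l\le e_r\le b$. For the base case, $a=\ell(\mathcal I)$ and $b=r(\mathcal I)$: since every interval lies in $R(\mathcal I,t)$ for $t<\ell(\mathcal I)$, $D(\mathcal I,\cdot)$ is strictly decreasing there, and symmetrically to the right of $r(\mathcal I)$, so no optimal gathering point lies outside $[\ell(\mathcal I),r(\mathcal I)]$ and, by \Cref{lem:opt_is_endpoint}, at least one lies inside. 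For the inductive step, assume the guard holds at the current $x$. If the assignment $b\gets x$ is executed it is because $D(\mathcal I,r(L(\mathcal I,x)))<D(\mathcal I,x)$ with $r(L(\mathcal I,x))<x$, and convexity of $D(\mathcal I,\cdot)$ (\Cref{cor:d_convex}) forces a strict decrease to the left of $x$ to occur only when $x>e_r$, so $[e_l,e_r]\subseteq[a,x]$; the case $a\gets x$ is symmetric, and a one-line convexity argument shows that the two assignments cannot both fire in the same iteration. Since $x$ is chosen as the median of the endpoints currently in $[a,b]$, each pass that re-enters the loop roughly halves this candidate set, so the loop terminates (after $O(\log n)$ passes), at which point the interval still satisfies the invariant.

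It then remains to prove that the returned $x$ in fact lies in $[e_l,e_r]$. Suppose not; by symmetry assume $x>e_r$. The invariant gives $a\le e_l\le e_r<x\le b$, and by \Cref{lem:opt_is_endpoint} the closed interval $[e_l,e_r]$ contains an endpoint, so the endpoints of $E(\mathcal I)\cap[a,b]$ are not all to the right of $e_r$. Combining this with convexity and piecewise linearity of $D(\mathcal I,\cdot)$ (\Cref{cor:d_convex}) and with \Cref{lem:opt}, I would derive that the guard must actually have evaluated to true at $x$ — concretely, that $D(\mathcal I,x)>D(\mathcal I,r(L(\mathcal I,x)))$ — contradicting the fact that the loop terminated; hence $x\in[e_l,e_r]$, and \Cref{lem:opt} gives that $x$ is an optimal gathering point. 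I expect this last step to be the main obstacle. The delicacy is that $r(L(\mathcal I,x))$ and $\ell(R(\mathcal I,x))$ need not be the endpoints of $E(\mathcal I)$ immediately adjacent to $x$ (left endpoints of intervals straddling $x$ may lie strictly between $r(L(\mathcal I,x))$ and $x$), so one cannot just read the left and right slopes of $D(\mathcal I,\cdot)$ at $x$ off these two comparison values; the argument has to lean on the invariant (the optimal range stays inside $[a,b]$) together with the fact that $x$ is the \emph{median} of the endpoints of $[a,b]$ to exclude $x$ sitting strictly beyond $e_r$ (resp.\ $e_l$) while the guard is false.
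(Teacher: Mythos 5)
The easy half of your argument is the same as the paper's: once the loop exits, the negation of the guard in \cref{alg:while} gives $D(\mathcal I,x)\le D(\mathcal I,\ell)$ and $D(\mathcal I,x)\le D(\mathcal I,r)$. The substantive clause --- that these two inequalities force $x$ to be an optimal gathering point --- is exactly the step you do not carry out: you announce that you ``would derive'' that the guard must have been true whenever $x$ lies outside $[e_l,e_r]$, and you yourself flag that derivation as the main obstacle. Since this is the only nontrivial content of \Cref{lem:alg:opt}, the proposal has a genuine gap at its crux. (The paper disposes of the step by a one-line appeal to \Cref{lem:opt_is_endpoint,lem:opt}; your invariant-plus-median-halving plan is a more careful route, but it stops exactly where the difficulty starts.)

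Moreover, the ``delicacy'' you flag is not something the invariant or the median choice can absorb: with $\ell=r(L(\mathcal I,x))$ and $r=\ell(R(\mathcal I,x))$, left endpoints of intervals reaching past $x$ may lie strictly inside $(\ell,x)$, and then convexity lets the two inequalities exclude better points only outside $(\ell,r)$, not inside. Concretely, take $\mathcal I=\{[-1,0],[1,100],[2,50],[10,11]\}$ with slopes $m_I=3,10,1,1$. Then $D(\mathcal I,0)=22$, $D(\mathcal I,1)=13$, $D(\mathcal I,2)=14$, $D(\mathcal I,10)=30$; the point $x=2$ is the (lower) median of the eight endpoints, $r(L(\mathcal I,2))=0$ and $\ell(R(\mathcal I,2))=10$, so the guard is false at the very first test and \Cref{alg:complete} returns $x=2$, although the unique optimal gathering point is $1$ (if your median convention picks the other middle endpoint, pad the instance with one far-away interval of tiny slope to make $2$ the chosen median; the same failure occurs). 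Your invariant holds vacuously here and the median property contributes nothing, so the contradiction you hope to extract does not exist: the implication you need is simply not valid for these comparison points. The statement does become a one-liner if $\ell$ and $r$ are instead the endpoints of $E(\mathcal I)$ immediately adjacent to $x$: then $D(\mathcal I,\cdot)$ is linear on $[\ell,x]$ and on $[x,r]$, the two inequalities give nonpositive slope on the left segment and nonnegative slope on the right one, and \Cref{cor:d_convex} yields global optimality. So the productive fix is to change, or justify reinterpreting, the comparison points --- not to push the invariant argument harder.
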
 
    \begin{proof}
        From \Cref{alg:while} of \Cref{alg:complete}, 
        $D \le D^{-}$ and 
        $D \le D^{+}$ hold.
        Otherwise, \Cref{alg:complete} does not halt.
        We need to prove that these inequalities are equivalent to asking whether $D(\I, x) \le D(\I, x^{+})$ and $D(\I, x) \le D(\I, x^{-})$ hold.
        Let $\I^0 = \I \cup R(\I^0,x^{-}) \cup L(\I^0,x^{+})$ be collection before removing the intervals in $R(\I^0,x^{-})$ and $L(\I^0,x^{+})$.
        We need to show that comparing $D$ with $D^{-}$ and $D^{+}$ is equivalent to comparing $D(\I^0,x)$ with $D(\I^0,y^{-})$ and $D(\I^0,y^{+})$.
        Notice that the total moving distance of $\I^0$ is equal to $D(\I,y) + D(R(\I^0,x^{-}),y) + D(L(\I^0,x^{+}),y)$, for any $\ell \le y\le r$.
        Since $r = x^{-}$ and $x^{-} < \ell(R(\I^{0},x^{-}))$, the total moving distance of $R(\I^{0},x^{-})$ to $y$ is $D(R(\I^{0},x^{-}),y) = D(R(\I^{0},x^{-}),x^{-}) + (x^{-}-y)W_R$ by \Cref{lem:shift_further_left_right} for any $\ell \le y \le r$.
        Analogously, the total moving distance of $L(\I^0,x^{+})$ to $y$ is $D(L(\I^0,x^{+}),y) = D(L(\I^0,x^{+}),x^{+}) + (y-x^{+})W_L$ for any $\ell \le y \le r$.
        Consequently we can subtract the values of $D(R(\I,x^{-}),x^{-})$ and $D(L(\I^0,x^{+}),x^{+})$ as it does not change the comparison of $D(\I^0,x)$ with $D(\I^0,y^{-})$ and $D(\I^0,y^{+})$.
        This gives the values of $D$, $D^{-}$ and $D^{+}$.
        Lastly, by \Cref{lem:opt_is_endpoint} and \Cref{lem:opt}, the point $x$ exists and $x$ is an optimal gathering point.
    \end{proof}
    
    We analyse the time complexity of \Cref{alg:complete}. 
    We claim that the output gives an optimal gathering point for an arbitrary collection of $n$ intervals in $O(n)$ time.
    Initially, \Cref{alg:complete} calculates the total moving distance for the entire collection.
    Straightforwardly, this can be done in $O(n)$ time.
    To speed up subsequent calculations, we decompose $D(\I, x)$ into three parts $D(L, x)$, $D(R, x)$, and $D(\I \setminus (L \cup R), x)$, where $L = L(\I,x^+)$ and $R = R(\I,x^{-})$ are the collections of intervals in the left/right parts of the input partitioned at $x^{-}$ and $x^{+}$.
    We show that by this decomposition, the total complexity of the loop in \cref{alg:while} is $O(n)$.
    To accomplish this, we need to prove \Cref{lem:intervals_bounded}:

    \begin{lemma}\label{lem:intervals_bounded}
        After \cref{alg:remove_itvs_too_big} of \Cref{alg:complete}, the size of $\I \setminus (L(\I,x^+) \cup R(\I,x^{-}))$ is bounded by the number of endpoints $e$ for which $\ell \le e \le r$.
    \end{lemma}
    \begin{proof}
        \Cref{alg:remove_itvs_too_big} removes all intervals that cover at least all endpoints ranging from $\ell$ to $r$.
        Thus, $\I \setminus (L(\I,x^+) \cup R(\I,x^{-}))$ can be decomposed into the following disjoint collections: $I_1 = \set{I \in \I\mid \ell(I) \le \ell \le r(I)}$, $\I_2 = \set{I \in \I\mid \ell(I) \le r \le r(I)}$ and $\I_3 = \set{I \in \I \mid \ell < \ell(I) \land r(I) < r}$.
        That is, $\I_1$ contains the intervals that intersect $\ell$ but no $r$, $\I_2$ contains the intervals that intersect $r$ but no $\ell$ and $\I_3$ contains the intervals that are strictly between $\ell$ and $r$.
        For each $I \in \I_1 \cup \I_2$, there is one endpoint between $\ell$ and $r$.
        Moreover, there are two endpoints between $\ell$ and $r$ for each $I \in \I_3$.
        It follows that $\size{\I \setminus (L(\I,x^+) \cup R(\I,x^{-}))} < \size{\I_1 \cup \I_2} + 2\size{\I_3} = \size{\set{e\in \E(\I)\mid \ell \le e \le r}}$.
        Therefore, the lemma statement is true.
    \end{proof}

    \begin{theorem}
        Given a collection of intervals, an optimal gathering point can be found in $O(n)$ time.
        \label{thm:complete_linear}
    \end{theorem}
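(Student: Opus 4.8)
The plan is to establish \Cref{thm:complete_linear} by showing that \Cref{alg:complete} is both correct and implementable in $O(n)$ time; since correctness is essentially already in hand, the real work is the time analysis.

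\emph{Correctness and number of rounds.} By \Cref{lem:alg:opt}, whenever \Cref{alg:complete} halts, the point it returns is an optimal gathering point, so it suffices to bound the number of loop iterations. By convexity of $D(\mathcal I,\cdot)$ (\Cref{cor:d_convex}), $D(\mathcal I,x)\le\max\{D(\mathcal I,y_1),D(\mathcal I,y_2)\}$ for all $y_1<x<y_2$, hence in each round at most one of the two conditional updates of $[\ell,r]$ fires; the one that fires replaces $[\ell,r]$ by the portion of the current endpoint range lying on one side of the median $x$, so the number of endpoints inside $[\ell,r]$ drops by a constant factor (ranges of size at most two being handled directly in $O(1)$). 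Because, by \Cref{lem:opt_is_endpoint,lem:opt} together with this pruning rule, some optimal gathering point always remains an endpoint inside $[\ell,r]$, the loop runs for $O(\log n)$ rounds.

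\emph{An $O(|\mathcal A|)$ evaluation of $D$ via decomposition.} A naive evaluation of $D(\mathcal I,\cdot)$ at the three probe points of a round costs $\Theta(n)$, which only yields an $O(n\log n)$ bound; and since the input is not assumed sorted, we also cannot afford an initial sort (instead we select the median of the endpoints in $[\ell,r]$ with a deterministic linear-time selection routine, so one pass over the endpoints suffices per round). To evaluate $D$ faster I would carry, alongside $[\ell,r]$, an affine function $a\,y+b$ and a list $\mathcal A$ of \emph{active} intervals --- those intervals possessing an endpoint strictly inside $(\ell,r)$ --- maintaining the invariant
\[
    D(\mathcal I,y)\;=\;a\,y+b+\sum_{I\in\mathcal A} d_I(y)\qquad\text{for all }y\in[\ell,r].
\]
Every non-active interval contributes an affine term on $[\ell,r]$ that is folded into $(a,b)$: one entirely left of $\ell$ contributes $m_I\bigl(y-r(I)\bigr)$, one entirely right of $r$ contributes $m_I\bigl(\ell(I)-y\bigr)$, and one spanning $[\ell,r]$ contributes $0$. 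Since each active interval witnesses at least one endpoint inside $(\ell,r)$, we have $|\mathcal A|\le m$, where $m$ is the number of endpoints inside $[\ell,r]$; hence the three probe evaluations cost $O(|\mathcal A|)=O(m)$.

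\emph{Incremental maintenance and the time bound.} When a round shrinks the range, say $r\gets x$, the intervals leaving $\mathcal A$ are exactly those that no longer have an endpoint in the new open range: such an interval either becomes settled on the right, if $\ell(I)\ge x$, in which case we decrease $a$ by $m_I$ and increase $b$ by $m_I\,\ell(I)$, or becomes spanning, leaving $(a,b)$ unchanged; the case $\ell\gets x$ is symmetric. All these intervals are identified by one scan of $\mathcal A$, and $\mathcal A$ only ever shrinks. A round therefore costs $O(m_i)$, where $m_i$ is the number of endpoints in $[\ell,r]$ at the start of the round --- this absorbs the linear-time median selection, the partition of the endpoint list that yields the next round's range, the $O(|\mathcal A|)$ probe evaluations, and the $O(|\mathcal A|)$ maintenance scan. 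With $m_0\le 2n$ and $m_{i+1}\le m_i/2+O(1)$, summing over rounds gives total time $\sum_i O(m_i)=O(n)$. The step I expect to be the main obstacle is precisely this last one: checking that $(a,b)$ is updated correctly for every way an interval can transition out of $\mathcal A$ --- to a settled side or to spanning --- and under coincident endpoints and closed-interval boundary effects, and confirming the invariant $|\mathcal A|\le m_i$ that makes the sum geometric.
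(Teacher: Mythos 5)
Your proposal is correct and follows essentially the same route as the paper's proof: binary search on the median of the endpoints remaining in $[\ell,r]$, with each probe of $D(\mathcal I,\cdot)$ made cheap by folding the intervals settled to the left/right of the range into an affine (linear) part and only summing over the intervals still active, so that round $i$ costs $O(m_i)$ and the halving of endpoints gives a geometric sum of $O(n)$. Your bookkeeping is in fact slightly more careful than the paper's (explicit slopes $m_I$ in the folded affine term, explicit handling of intervals spanning $[\ell,r]$, and the linear-time median selection), but these are refinements of the same decomposition idea rather than a different argument.
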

    \begin{proof}
        From \Cref{lem:opt,lem:alg:opt}, \Cref{alg:complete} finds an optimal gathering point $x$.
        We show that \Cref{alg:complete} runs in $O(n)$ time.
        We assume that $\I$ is given as a (not necessarily sorted) doubly linked list.
        First, lines 2-5 of the algorithm can be done in $O(n)$ by iterating over $\I$.
        The median $x$ can be calculated in linear time~\cite{Blum1973}.
        The core part is the loop in line 6, and we must show that the total number of steps is linearly bounded.
        \Cref{alg:remove_itvs_too_big} can be done in $O(\size{\I})$ by removing intervals that cover $\ell$ and $r$ from the linked list.
        We note that these intervals can be discarded from further calculation, as their moving distance is $0$ for any endpoint between $\ell$ and $r$.
        Assume that $D^{-} < D$.
        \Cref{alg:disc_right} can be done in $O(\size{\I})$ by sweeping $\I$ and discarding all intervals from the linked list with left endpoint greater than $x^{-}$.
        \Cref{alg:disc_right_median,alg:disc_right_median_l_r} are done in linear time by the number of endpoints between $\ell$ and $r$.
        \Cref{alg:disc_right_mds,alg:disc_right_weights} are done in $O(\size{\I\setminus (R(\I,x^{-})\cup L(\I,x^{+}))})$, which is bounded by the number of endpoints between $\ell$ and $r$ by \Cref{lem:intervals_bounded}.
        
        The above analysis is analogous to the case when $D^{+} < D$.
        Since $x^{-}$ and $x^{+}$ are the two endpoints next to the median endpoint and we partition the set of endpoints at these values, it follows that the number of endpoints is halved at each iteration.
        Again, all operations are linearly bounded by the number of endpoints between $\ell$ and $r$.
        Therefore, it can be concluded that the total number of steps is linearly bounded by $n$.
    \end{proof}
    Notice that the algorithm exclusively gives an optimal gathering point $x$ of $\I$.
    However, once we obtain $x$, we can easily calculate the total moving distance $D(\I,x)$ by iterating over $\I$.
    
\subsubsection{All distance weights are the same}
    We turn our attention to a specific yet sufficiently general case of the above problem. 
    In the definition of the moving distance, we allow each interval $I$ to have a distinct distance weight $w_I$.
    We consider the case where $w_I = w_J$ is true for any $I, J \in \I$.
    We say that the \emph{moving distance function is uniform} to refer to this case. 
    Given $n$ intervals for which the moving distance is uniform, there is no need to determine which endpoint is an optimal gathering point, as they correspond to the $n$th and $(n+1)$th endpoints.

\begin{theorem}
    If the moving distance function of each interval in $\I$ is uniform, then the $n$th and $(n+1)$th endpoints are optimal gathering points.
    \label{thm:ogp_median}
\end{theorem}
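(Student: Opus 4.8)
The plan is to determine the slope of every linear piece of $D(\mathcal I, x)$ under the extra hypothesis that all slopes coincide, say $m := m_I$ for every $I \in \mathcal I$, and then read off the minimiser. First I would differentiate \Cref{equ:moving_distance}: at a point $x$ that is not an endpoint of any interval, $d_I$ has slope $-m$ when $x < \ell(I)$, slope $+m$ when $x > r(I)$, and slope $0$ when $\ell(I) < x < r(I)$. Summing over $I \in \mathcal I$, the slope of $D(\mathcal I, x)$ at such an $x$ equals $m\bigl(|L(\mathcal I, x)| - |R(\mathcal I, x)|\bigr)$. The key simplification is that $|L(\mathcal I, x)|$ counts the right endpoints of $\mathcal I$ lying below $x$ while $|R(\mathcal I, x)|$ counts the left endpoints lying above $x$; since $\mathcal I$ has $n$ left endpoints and $n$ right endpoints, this difference equals $k - n$, where $k$ is the total number of endpoints of $\mathcal I$ below $x$. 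Thus the slope of $D(\mathcal I, x)$ at $x$ is $m(k-n)$, which is negative, zero, or positive according as $k < n$, $k = n$, or $k > n$.

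Next I would feed this into the convex, piecewise-linear structure of \Cref{cor:d_convex}, where $D(\mathcal I, x)$ is carried by linear pieces $D_1(\mathcal I, x), \ldots, D_{2n+1}(\mathcal I, x)$ ordered from left to right. On the interior of the $i$-th such piece, exactly $i-1$ endpoints lie below $x$, so its slope is $m(i-1-n)$; hence the piece $D_{n+1}(\mathcal I, x)$, which is active between the $n$-th and $(n+1)$-th endpoints, has slope $0$, every earlier piece has non-positive slope, and every later piece has non-negative slope. Therefore $D(\mathcal I, x)$ is constant on the closed interval between the $n$-th and $(n+1)$-th endpoints, non-increasing to its left, and non-decreasing to its right; since $D(\mathcal I,x)$ is continuous (a finite sum of continuous functions), both of these endpoints attain its global minimum and so are optimal gathering points. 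Equivalently, this is \Cref{lem:opt} with $\alpha = n+1$, since the slope of $D_{n+1}(\mathcal I, x)$ is non-positive and that of $D_{n+2}(\mathcal I, x)$ is non-negative.

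I do not anticipate a genuine difficulty: once the slope $m(i-1-n)$ of each piece is identified, the statement is just the uniform-slope specialisation of \Cref{lem:opt}. The one point needing mild care is degeneracy, namely several intervals sharing a common endpoint value, so that $D(\mathcal I, x)$ has strictly fewer than $2n+1$ distinct linear pieces. There the counting identity above still applies: wherever the slope is defined and $x$ is strictly below the $n$-th endpoint, at most $n-1$ endpoints lie below $x$, giving slope $\le -m < 0$; and wherever $x$ is strictly above the $(n+1)$-th endpoint, at least $n+1$ endpoints lie below $x$, giving slope $\ge m > 0$. Combined with continuity of $D(\mathcal I, x)$, this again forces both the $n$-th and the $(n+1)$-th endpoints to be global minimisers.
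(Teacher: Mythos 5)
Your proposal is correct and follows essentially the same route as the paper: it computes the slope of each linear piece of $D(\mathcal I, x)$ (increasing by $m$ per endpoint from $-mn$, so piece $n+1$ has slope zero) and then invokes the convexity argument of \Cref{lem:opt} to conclude that the $n$-th and $(n+1)$-th endpoints are optimal. Your explicit endpoint-counting derivation of the slopes and the remark on coinciding endpoints are just slightly more detailed versions of the paper's argument.
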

\begin{proof}
    Since the moving distance function is uniform, 
    the difference of the slopes of the linear functions $D_1(\I, x), \ldots, D_{2n+1}(\I, x)$ is the same.
    In other words, assume that the slope of $D_i(\I, x)$ is $s_i$.
    From the definition of the total moving distance function and the fact that the moving distance function is uniform,
    $s_{i+1} - s_i = m$ for any $1 \le i \le 2n$.
    Moreover, $s_{n+1} = 0$ holds since $s_1 = -mn$.
    By \Cref{lem:opt}, the $n$th point and the $(n+1)$th point are optimal gathering points.
\end{proof}

\subsection{Unit Square Graphs}\label{ssec:results_usg}
    Given a collection of $n$ unit squares, we show that {\gged} is $O(n)$-time solvable for $\complete$ over the $L_1$ distance. 
    Recall that over the $L_1$ distance, the distance between two points $p_1 = (x_1,y_1)$, $p_2 = (x_2,y_2)$ in the plane is $|x_2 - x_1| + |y_2-y_1|$.
    For each unit square $S$, we define the \emph{moving distance function $d_{S}$} as follows.
    \begin{definition}
        The \emph{moving distance} of a unit square $S \in \S$ with centre $(s_x, s_y)$ to a point $p = (p_x, p_y)$ is a function $d_{S}:\mathbb{R}^2 \rightarrow \mathbb{R}$ defined as:
        \[
            d_{S}(p) = w_S(\max(|s_x - p_x| - 1/2, 0) + \max(|s_y - p_y| - 1/2, 0)),
        \]
        where $w_S \in \mathbb{R}^+$ is an arbitrary constant describing the distance weight.
        The \emph{total moving distance} of a collection of squares is then defined as $D(\S, p) = \sum_{S\in\S} d_S(p)$.
    \end{definition}

    For a given collection of $n$ unit squares $\S$,
    we give a $O(n)$-time algorithm that finds a point $p$ that minimises the total moving distance function.
    Similarly as for intervals, the point $p$ is an \emph{optimal gathering point} if $p$ minimises the total moving distance function.
    
    \begin{theorem}
        Given a collection of unit squares $\S$,
        an optimal gathering point can be found in $O(n)$ time so that $G(\S) \in \complete$ over the $L_{1}$ distance.
    \end{theorem}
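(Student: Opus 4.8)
The plan is to reduce the two-dimensional $L_1$ problem to two independent one-dimensional instances and then reuse \Cref{thm:complete_linear}. The key observation is that the $L_1$ moving distance of a unit square separates across coordinates: writing $I^x_S = [s_x - 1/2,\, s_x + 1/2]$ and $I^y_S = [s_y - 1/2,\, s_y + 1/2]$ for the axis-parallel projections of $S$ (each a unit interval carrying the same slope $m_S$), a direct check against \Cref{equ:moving_distance} shows $d_S(p) = d_{I^x_S}(p_x) + d_{I^y_S}(p_y)$ for every $p = (p_x, p_y)$. Summing over $\mathcal S$ yields $D(\mathcal S, p) = D(\mathcal I^x, p_x) + D(\mathcal I^y, p_y)$, where $\mathcal I^x = \{I^x_S : S \in \mathcal S\}$ and $\mathcal I^y = \{I^y_S : S \in \mathcal S\}$.

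Since the two summands depend on disjoint sets of variables, minimising $D(\mathcal S, \cdot)$ over $\mathbb R^2$ amounts to minimising $D(\mathcal I^x, \cdot)$ and $D(\mathcal I^y, \cdot)$ over $\mathbb R$ independently. Both $\mathcal I^x$ and $\mathcal I^y$ can be built from $\mathcal S$ in $O(n)$ time, and each is a collection of $n$ intervals, so by \Cref{thm:complete_linear} (whose algorithm does not require the input to be sorted and whose proof already handles coincident centres) an optimal $p_x^\star$ and an optimal $p_y^\star$ are computable in $O(n)$ time each. Then $p^\star = (p_x^\star, p_y^\star)$ minimises $D(\mathcal S, \cdot)$, and the whole computation runs in $O(n)$ time; note that \Cref{cor:d_convex} applies componentwise, so each one-dimensional subproblem is exactly the setting of the interval algorithm.

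It then remains to connect such a gathering point to $\Pi_{\texttt{comp}}$. If all squares are translated so that each contains $p^\star$, every pair of squares intersects, so the intersection graph is complete, and the cheapest such translation of $S$ costs exactly $d_S(p^\star)$; hence this solution has cost $D(\mathcal S, p^\star)$. Conversely, in any family of axis-parallel unit squares whose intersection graph is complete, the squares pairwise intersect, and since axis-parallel boxes have Helly number $2$ (being coordinatewise products of intervals, which have Helly number $2$ on the line) they share a common point $p$; the displacement of each $S$ in such a configuration is at least its minimal $L_1$ displacement to a position containing $p$, namely $d_S(p)/m_S$, so the total cost is at least $D(\mathcal S, p) \ge \min_q D(\mathcal S, q)$. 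Therefore $\min_q D(\mathcal S, q)$ equals the optimum for $\Pi_{\texttt{comp}}$, and $p^\star$ witnesses it.

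I expect the only genuinely delicate point to be this last paragraph, specifically the claim that completeness of the intersection graph of axis-parallel unit squares is equivalent to the existence of a common point; this Helly-type property is what legitimises searching over a single gathering point rather than over arbitrary complete configurations. The separability identity and the two invocations of \Cref{thm:complete_linear}, by contrast, are routine.
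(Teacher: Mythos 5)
Your proposal is correct and follows essentially the same route as the paper: separate the $L_1$ cost coordinatewise into two collections of unit intervals, minimise each one-dimensional total moving distance independently via \Cref{thm:complete_linear}, and combine the two optima into the point $(p_x^\star,p_y^\star)$. Your additional Helly-type paragraph linking gathering points to $\Pi_{\texttt{comp}}$ (pairwise-intersecting axis-parallel squares share a common point) is a correct justification of a step the paper leaves implicit, not a different method.
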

    \begin{proof}
        Let $\S = \{S_1, \ldots, S_n\}$ be a collection of unit squares.
        We denote the centre of $S_i$ by $(s^i_x, s^i_y)$.
       We define the collection of unit intervals $\I_x = \{I_1, \ldots, I_n\}$, where the centre of each $I_i$ corresponds to $s^i_x$.
        Similarly, we define the collection of unit intervals $\I_y$ for each $s^i_y$.
        Let $p_x$ and $p_y$ be two optimal gathering points of $\I_x$ and $\I_y$, respectively.
        We show that the point $p = (p_x, p_y)$ minimises the total moving distance for $\S$.

        Suppose that $p$ is not an optimal gathering point for $\S$.
        Let $p^\mathrm{opt} = (p^\mathrm{opt}_x, p^\mathrm{opt}_y)$ be an optimal gathering point for $\S$.
        Since $D(\S, p) > D(\S, p^\mathrm{opt})$, 
        $\sum_{I \in \I_x}d_I(p_x) > \sum_{I_i \in \I_x}d_I(p^\mathrm{opt}_x)$ or
        $\sum_{I \in \I_y}d_I(p_y) > \sum_{I_i \in \I_y}d_I(p^\mathrm{opt}_y)$ hold.
        However, by the initial definition, 
        it contradicts the optimality of $p_x$ and $p_y$.
    \end{proof}


\subsection{Discussion: Geometric Median}
As mentioned in the introduction, finding the geometric median is closely related to obtaining dense graphs in our model.
Given a collection of intervals $\I$, obtaining a graph $G(\I)$ such that $G(\I) \in \complete$ requires moving intervals \emph{closer} towards a common point (the optimal gathering point) so that all pairs intersect, while minimising the total moving distance.
For the class $\complete$, solving {\gged} can thus be interpreted as a variant of the geometric median problem.
Indeed, we show at the end of this section that when the moving distance of intervals is unweighted, the optimal gathering points are the medians of the set of endpoints.
However, this equivalence is not exact, as additional factors arise depending on the type of object.
For weighted intervals, obtaining a graph in $\complete$ depends not only on the distance from the centre of each interval to the optimal gathering point, but also the length of each interval.
Longer intervals may require shorter movements to achieve an intersection, reducing their contribution to the total moving distance value compared to short intervals.
In other words, a \emph{subtractive weight} determined by the sizes of the objects must be incorporated into the objective function alongside the distance weights.
Our moving distance function properly incorporates the sizes of the intervals.
We note that this is also the case for collection of geometric objects of higher dimensions, such as disks or, more generally, $d$-balls of arbitrary radii, where the radius plays an analogous role.
The problem diverges further from the geometric median when the objects are not radially symmetric. That is, when the distance from the centre to the boundary varies, as is the case for squares and convex polygons.

\section{Satisfying \texorpdfstring{$\Pi_{k\texttt{-clique}}$}{} on Unit Interval Graphs in \texorpdfstring{$O(n \log n)$ time}{}}
\label{sec:algorithm_kclique}
\newcommand{\icur}[0]{\dot{\I}}
\newcommand{\dcur}[0]{\dot{D}}
\newcommand{\xcur}[0]{\dot{x}}
\newcommand{\imin}[0]{\I_{\min}}
\newcommand{\dmin}[0]{D_{\min}}
\newcommand{\xmin}[0]{x_{\min}}

In this section, we discuss the complexity of {\gged} for obtaining a graph in $\kclique$ given a collection of unit intervals. 
It turns out that this case can be efficiently solvable if the moving distance function is uniform. 
In particular, given a collection of unit intervals $\I$ where the moving distance function is uniform,
we give an $O(n\log n)$-time algorithm to solve {\gged} for $\kclique$.
We first define the following lemma:
\begin{lemma}
    Given a collection of unit intervals $\I = \{I_1,\ldots, I_n\}$ such that $c(I_i)~\leq~c(I_{i+1})$ for all $1\leq i \leq n-1$, an optimal solution for moving intervals in $\I$ so that $G(\I) \in \kclique$ consists of moving intervals $I_i,\ldots, I_{i+k-1}$ to a gathering point $x$, where $1\leq i \leq n -k + 1$.
    \label{lem:kconsec_kclique}
\end{lemma}
\begin{proof}
    If $G(\I)$ has a $k$-clique, then the statement holds.
    Thus without loss of generality, assume that $G(\I)$ has no $k$-clique.
    Let $\I^{\mathrm{opt}}$ be an optimal solution.
    Since $G(\I^{\mathrm{opt}})$ is in $\kclique$,
    there is a point $x$ such that $\I^{\mathrm{opt}}$ has at least $\ell \geq k$ intervals $J_1^{\mathrm{opt}}, \ldots, J_\ell^{\mathrm{opt}}$ that intersect $x$.
    If $\ell > k$, then it contradicts the optimality of $\I^{\mathrm{opt}}$.
    Hence we assume $\ell = k$.
    
    Let $J_1, \ldots, J_k$ be the subcollection of intervals in $\I$ that correspond to intervals $J_1^{\mathrm{opt}}, \ldots, J_k^{\mathrm{opt}}$.
    Without loss of generality, the centre of $J_i$ is smaller than the centre of $J_j$ for all $1 \le i < j \le k$.
    We assume that $J_1, \ldots, J_k$ are not consecutive in $\I$ and show that this assumption contradicts the optimality of $\I^{\mathrm{opt}}$.
    Notice that $x$ must be larger than the centre of $J_1$ and smaller than the centre of $J_k$, by the optimality of $\I^{\mathrm{opt}}$.
    Let $J'$ be an interval in $\I \setminus \{J_1, \ldots, J_k\}$ such that $c(J_1) <c(J')< c(J_k)$.
    Since $J_1, \ldots, J_k$ are not consecutive, such $J'$ exists.
    If the point $x$ is to the left of $J'$, then by removing $J_k$ and adding $J'$, we obtain a shorter total moving distance than $\I^{\mathrm{opt}}$.
    Similarly, if the point $x$ is to the right of $J'$, then by removing $J_1$ and adding $J'$, the total moving distance also becomes shorter.
    Therefore, $J_1, \ldots, J_k$ must be consecutive.
\end{proof}
The previous lemma implies that examining sequences of $k$ consecutive intervals of $\I$ is sufficient to obtain a $k$-clique using the minimum moving distance. 
Furthermore, we deduce from \Cref{thm:ogp_median} that the $k$th and $(k+1)$th endpoints are optimal gathering points of any subcollection of $k$ unit intervals.

Before the details of the algorithm, we prove \Cref{lem:eitoeiplusone} (see also \Cref{fig:eitoeiplusone}):

\begin{lemma}\label{lem:eitoeiplusone}
    The total moving distance $D(\I,e_{i+1})$ equals 
    $D(\I,e_{i}) + (e_{i+1} - e_{i})(|L(\I, e_{i+1})| - |R(\I, e_{i})|)$.
\end{lemma}
\begin{proof}
    The statement holds if $e_{i} = e_{i+1}$.
    Thus we assume that $e_{i} < e_{i+1}$.
    The set $L(\I,e_{i+1})$ consists of the intervals contained in $L(\I,e_{i})$ and the intervals having $e_{i}$ as the right endpoint. 
    That is, $L(\I,e_{i+1}) =  L(\I, e_{i}) \cup \E_r(\I, e_{i})$. 
    Similarly, $R(\I,e_{i+1}^\I) = R(\I,e_{i}^\I) \setminus \E_{\ell}(\I, e_{i+1}^\I)$.
    This gives the following formula:
    \begin{align*}
        D(\I,e_{i+1}) & = \sum_{I \in \I} d_I(e_{i+1})= \sum_{I \in L(\I,e_{i+1})} d_I(e_{i+1}) + \sum_{I \in R(\I,e_{i+1})} d_I(e_{i+1})\\
        & = \underbrace{\sum_{I\in L(\I,e_{i})\cup \E_r(\I,e_{i})} d_I(e_{i+1})}_{A} + \underbrace{\sum_{I\in R(\I,e_{i})\setminus \E_\ell(\I,e_{i+1})} d_I(e_{i+1})}_{B}
    \end{align*}
    We first decompose the sum $A$.
    Let $\delta$ be equal to $e_{i+1} - e_{i}$.
    For each $I \in L(\I, e_{i})$,
    the distance of $I$ to $e_{i+1}$ is $d_I(e_{i+1}) = d_I(e_{i}) + \delta$.
    Moreover, the moving distance of $I \in \E_r(\I,e_{i})$ to $e_{i+1}$ is $d_I(e_{i+1}) = \delta$ since $r(I) = e_{i}$.
    Thus,
    \begin{align*}
        A & = \sum_{I\in L(\I,e_{i})} d_I(e_{i+1}) + \sum_{I\in\E_r(\I,e_{i})} d_I(e_{i+1})\\
        & = \sum_{I\in L(\I,e_{i})} d_I(e_{i+1}) + \sum_{I\in\E_r(\I,e_{i})} \delta = \sum_{I\in L(\I,e_{i})} (d_I(e_{i})+\delta) + \sum_{I\in\E_r(\I,e_{i})} \delta\\
        & = \sum_{I\in L(\I,e_{i})} d_I(e_{i}) + \sum_{I\in L(\I,e_{i})} \delta + \sum_{I\in\E_r(\I,e_{i})} \delta = \sum_{I\in L(\I,e_{i})} d_I(e_{i}) + \sum_{I\in L(\I,e_{i+1})} \delta
    \end{align*}

    Next, we decompose the sum $B$.
    For each $I \in R(\I,e_{i})$, the distance of $I$ to $e_{i+1}$ is equal to $d_I(e_{i+1}) = d_I(e_{i}) - \delta$.
    Moreover, the moving distance of $I \in \E_{\ell}(\I,e_{i+1})$ to $e_{i}$ is $d_I(e_{i}) = \delta$ since $\ell(I) = e_{i+1}$.
    Thus we have that:
    \begin{align*}
        B & = \sum_{I\in R(\I,e_{i})} d_I(e_{i+1}) - \sum_{\E_\ell(\I,e_{i+1})} d_I(e_{i+1}) = \sum_{I\in R(\I,e_{i})} d_I(e_{i+1}) - 0\\
        & = \sum_{I\in R(\I,e_{i})} (d_I(e_{i}) - \delta) = \sum_{I\in R(\I,e_{i})} d_I(e_{i}) - \sum_{I\in R(\I,e_{i})} \delta
    \end{align*}
    Combining the two formulas, we obtain that:
    \begin{align*}
        D(\I,e_{i+1}) & = A+B =  \sum_{I\in L(\I,e_{i})} d_I(e_{i}) + \sum_{I\in L(\I,e_{i+1})} \delta + \sum_{I\in R(\I,e_{i})} d_I(e_{i}) - \sum_{I\in R(\I,e_{i})} \delta\\
        & = \sum_{I \in L(\I,e_{i})} d_I(e_{i}) + \sum_{I \in R(\I,e_{i})} d_I(e_{i}) + \delta \left(|L(\I, e_{i+1})| - |R(\I, e_{i})|\right)\\
        & = D(\I,e_{i}) + (e_{i+1} - e_{i})\left(|L(\I, e_{i+1})| - |R(\I, e_{i})|\right).
    \end{align*}
    Therefore, the formula given in the equation is correct.
\end{proof}

\begin{figure}[t]
    \centering
    \includegraphics[scale=1,page=1]{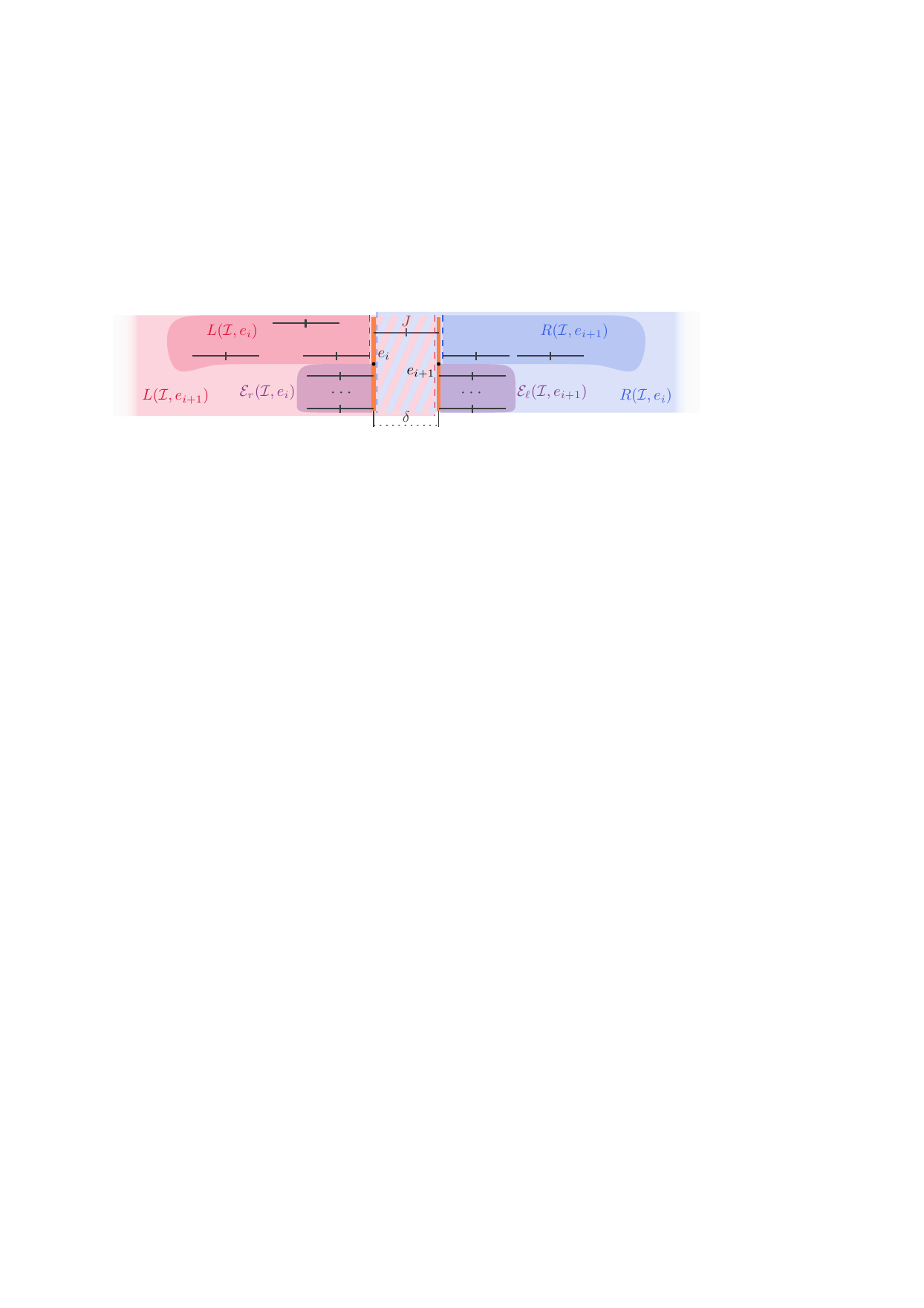}
    \caption{Illustration of \Cref{lem:eitoeiplusone}: The subcollection $L(\mathcal{I},e_{i+1})$ consists of the union of $L(\mathcal{I},e_{i})$ and $\mathcal{E}_r(\mathcal{I},e_{i})$. Similarly, $R(\mathcal{I},e_{i})$ consists of the union of $R(\mathcal{I},e_{i+1})$ and $\mathcal{E}_\ell(\mathcal{I},e_{i+1})$. An interval $J$ may exist, however such an interval is not contained in any subcollection illustrated and its moving distance is zero for both endpoints.}
    \label{fig:eitoeiplusone}
\end{figure}  

\Cref{lem:eitoeiplusone} allows us to calculate the total moving distance to an endpoint in constant time when we have the total moving distance to the previous endpoint.

\paragraph{Algorithm Overview} Given a collection of $n$ unit intervals $\I$ and a value $k$, the algorithm sweeps $\I$ looking for the subcollection of $k$ consecutive intervals $\imin \subseteq \I$ with optimal gathering point $\xmin$.
The algorithm starts by sorting $\I$ and determining the set of endpoints $\E(\I) = \set{e_1,\ldots,e_{2n}}$.
Once $\I$ is sorted, we initially assign $\imin$ to be the subcollection $\set{I_1,\ldots,I_k}$ and $\xmin$ to be $e_1$, and calculate the minimum total moving distance $\dmin = D(\imin,\xmin)$.
Assume that $\icur = \set{I_{j-k+1},\ldots,I_{j}}$ for $k\le j < n$.
We iterate over $e_2,\ldots,e_{2n}$.
In the $i$th iteration, we sweep over $I_{j-k+2},\ldots,I_n$ and determine the collection of $k$ consecutive intervals for which the total moving distance to $e_i$ is minimum.
We compare the distances to the point $e_i$ of the $(j-k+1)$th interval and the $(j+1)$th interval.
If $d_{I_{j-k+1}}(e_i) > d_{I_{j+1}}(e_i)$, then we set $\icur$ to $(\icur\cup\set{I_{j+1}})\setminus \set{I_{j-k+1}}$ and replace $\dcur$ with $D(\icur,e_{i}) - d_{I_{j-k+1}}(e_{i}) + d_{I_{j+1}}(e_{i})$.
We increase $j$ by one unit until the inequality becomes false.
Once we obtain the subcollection with minimum total moving distance for $e_i$, we compare $\dcur$ with $\dmin$.
If $\dmin > \dcur$, we replace $\imin$, $\dmin$ and $\xmin$ with $\icur$, $\dcur$ and $e_i$, respectively.
We then increase $i$ by one unit and calculate $D(\icur,e_{i+1})$ using $D(\icur,e_{i})$. 
We repeat this procedure until we sweep all endpoints of $\E(\I)$ and claim that when the loop is finished, $\imin$ is the collection of $k$ consecutive intervals with minimum total moving distance and $\xmin$ an optimal gathering point.

\DontPrintSemicolon
\begin{algorithm}[t]
    \caption{An $O(n \log n)$-time algorithm for obtaining a graph in $\Pi = \kclique$}
    \label{alg:kclique-2}
    \Procedure{\rm{FindingOptimalClique}($\I$,$k$)}{
        Sort $\I$ such that $c(I_i)~\leq~c(I_{i+1})$ for all $1\leq i \leq n-1$.\;\label{alg2:sort}
        $\E(\I) = \set{e_1,\ldots,e_{2n}}$ is the set of endpoints of $\I$\;
        $\imin \gets \set{I_1,\ldots,I_k}$, $\dmin \gets D(\imin,e_1)$, $\xmin \gets e_1$\;
        $\icur \gets \imin$, $j\gets k$\;
        \While{$1 \leq i\leq 2n$}{\label{alg2:while}
            \While{ $j < n \And d_{I_{j-k+1}}(e_{i}) > d_{I_{j+1}}(e_{i})$}{\label{alg2:while_subcol}
                $\icur \gets (\icur \cup \set{I_{j+1}}) \setminus \set{I_{j-k+1}}$\;
                $\dcur \gets D(\icur,e_{i}) - d_{I_{j-k+1}}(e_{i}) + d_{I_{j+1}}(e_{i})$\;
                $j\gets j+1$\;
            }
            \lIf{$\dcur < \dmin$}{
                $\imin\gets \icur,\:\dmin \gets \dcur,\: \xmin \gets e_{i}$
            }
            $i\gets i+1$\;
            Calculate $D(\icur,e_i)$ using $\dcur$ and $\dcur \gets D(\icur,e_i)$\;\label{alg2:tmd_update}
        }
        \Return $(\imin,\xmin)$\;
    }
\end{algorithm}

In the algorithm, we replace the subcollection $\J=\set{I_{j-k+1},\ldots,I_{j}}$ with $\J'=\set{I_{j-k+2},\ldots,I_{j+1}}$ whenever the total moving distance to $e_i$ of $\J'$ is less than the total moving distance to $e_i$ of $\J$.
\Cref{lem:opt_subcol_for_epoint} allows us to discard $\J$ for all endpoints after $e_i$ (see also~\Cref{fig:opt_subcol_for_epoint}):

\begin{lemma}\label{lem:opt_subcol_for_epoint}
    Let $\J = \set{I_{j-k+1},\ldots,I_{j}}$ and $\J' = \set{I_{j-k+2},\ldots,I_{j+1}}$.
    If $D(\J,e_{\alpha})> D(\J',e_{\alpha})$, then $D(\J,e_{\alpha+i})> D(\J',e_{\alpha+i})$ for all $1 \le i\le n-\alpha$.
\end{lemma}
\begin{proof}
    Notice that $D(\J,e_\alpha)> D(\J',e_\alpha)$ implies that $d_{I_{j-k+1}}(e_{\alpha}) > d_{I_{j+1}}(e_{\alpha})$.
    We assume that there exists an $i$ for which $e_{\alpha+i}> e_{\alpha}$ holds, otherwise the lemma statement is trivial.
    Suppose that $d_{I_{j-k+1}}(e_{\alpha+i}) \le d_{I_{j+1}}(e_{\alpha+i})$ holds for such an $i$ and let $\delta = e_{\alpha+i} - e_{\alpha}$.
    Let $h = c(I_{j+1})+(c(I_{j+1})-c(I_{j-k+1}))/2$ be the middle point between $c(I_{j-k+1})$ and $c(I_{j+1})$.
    We note that $h<e_{\alpha}$ must hold; otherwise it contradicts the fact that $I_{j+1}$ is strictly closer to $e_{\alpha}$.
    If $h < e_{\alpha+i} < c(I_{j+1})$, then $d_{I_{j-k+1}}(e_{\alpha+i}) = d_{I_{j-k+1}}(e_{\alpha}) +\delta > d_{I_{j+1}}(e_{\alpha})-\delta = d_{I_{j+1}}(e_{\alpha+i})$.
    If $e_{\alpha+i} \ge c(I_{j+1})$, then $d_{I_{j-k+1}}(e_{\alpha+i}) = d_{I_{j-k+1}}(e_{\alpha}) +\delta > d_{I_{j+1}}(e_{\alpha})+\delta = d_{I_{j+1}}(e_{\alpha+i})$.
    In both cases, the existence of an $i$ where $d_{I_{j-k+1}}(e_{\alpha+i}) \le d_{I_{j+1}}(e_{\alpha+i})$ holds is contradicted.
    Therefore the lemma statement is true.
\end{proof}

\begin{figure}[t]
    \centering
    \includegraphics[scale=1,page=1]{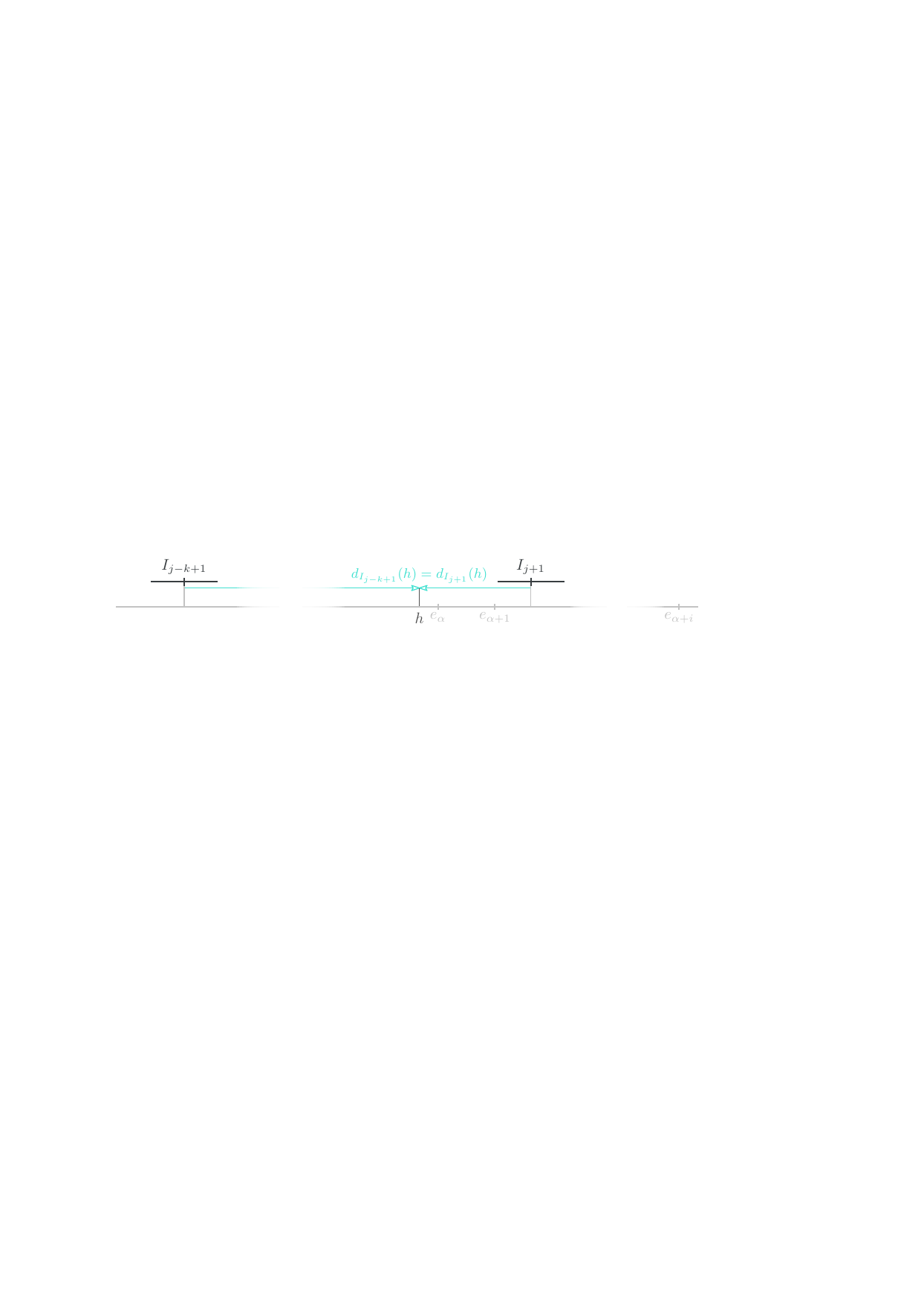}
    \caption{Illustration of \Cref{lem:opt_subcol_for_epoint}: The distance of $I_{j+1}$ is less than the distance of $I_{j-k+1}$ to $e_{\alpha}$ and any subsequent endpoints.}
    \label{fig:opt_subcol_for_epoint}
\end{figure}  

\begin{lemma}
    The subcollection $\imin$ is the subcollection of $k$ consecutive intervals with minimum total moving distance. Moreover, $\xmin$ is an optimal gathering point of $\imin$.
\end{lemma}
\begin{proof}
    We prove that (i) the loop in \cref{alg2:while_subcol} does not discard any subcollection that might be an optimal solution for an endpoint and (ii) that \cref{alg2:while} gives the desired subcollection.
    The loop in \cref{alg2:while_subcol} compares the total moving distance to $e_i$ of subcollections of $k$ consecutive intervals for $1\le i \le 2n$, starting at $\icur = \set{I_1,\ldots,I_k}$.
    Whenever a subcollection is discarded, $\icur$ is replaced.
    Let $\J = \set{I_{j-k+1},\ldots,I_{j}}$ and $\J' = \set{I_{j-k+2},\ldots,J_{j+1}}$.
    The only difference between $\J$ and $\J'$ is $I_{j-k+1}$ and $I_{j+1}$.
    Consequently comparing their total moving distances to $e_i$ is equivalent to ask whether $d_{I_{j-k+1}}(e_{i}) > d_{I_{j+1}}(e_{i})$.
    If $\J$ is replaced by $\J'$, \Cref{lem:opt_subcol_for_epoint} ensures that $\J'$ is a subcollection with a total moving distance less than $\J$ for any endpoint between $e_i$ and $e_{2n}$.
    Hence, we safely discard $\J$ from subsequent iterations.
    Moreover, \Cref{thm:ogp_median} and \Cref{lem:kconsec_kclique} ensure that there are endpoints in $\E(\I)$ that are optimal gathering points of subcollections of $k$ consecutive intervals.
    Consequently, if $e_i$ is an optimal gathering point, then $\icur$ must be the corresponding subcollection at the end of the loop in \cref{alg2:while_subcol}.
    Moreover, if $\dcur$ is less than $\dmin$, we properly replace $\imin$ and $\xmin$.
    Therefore, when the loop in \cref{alg2:while} is completed, $\imin$ is the collection of $k$ consecutive intervals with minimum total moving distance and $\xmin$ its optimal gathering point.
\end{proof}

\begin{theorem}
    Given a collection of unit intervals $\I$, moving intervals in $\I$ so that $G(\I) \in \kclique$ can be done in $O(n\log n)$ time.
    \label{thm:kclique_uig_uniquemd}
\end{theorem}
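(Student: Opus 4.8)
The plan is to prove correctness and the $O(n\log n)$ bound for \Cref{alg:kclique}. For correctness, I would first sort $\mathcal I$ by centre (line~\ref{alg2:sort}) and invoke \Cref{lem:kconsec_kclique}: since the moving distance function is unique, some optimal solution is obtained by picking an index $i$ with $1 \le i \le n-k+1$ and gathering the $k$ consecutive intervals $\mathcal I(i,k)$ at one point, so the optimum of Problem~\ref{pro:uig_general} equals $\min_{i}\min_{x} D(\mathcal I(i,k),x)$. Applying \Cref{thm:ogp_median} to the $k$-element collection $\mathcal I(i,k)$ shows that its $k$-th and $(k+1)$-th endpoints are optimal gathering points, hence $\min_x D(\mathcal I(i,k),x) = D(\mathcal I(i,k), x_k^{\mathcal I(i,k)})$. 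Since \Cref{alg:kclique} iterates over all such $i$, computes this value, and keeps the smallest one together with a witnessing gathering point and window, it returns an optimal solution.

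The substance of the argument is the running time, where I would show that after $O(n\log n)$ preprocessing each window costs $O(\log n)$. Throughout I maintain a balanced binary search tree augmented with subtree sizes (an order-statistic tree, as in~\cite{cormen2009}) storing the $2k$ endpoints of the current window; it answers, in $O(\log k) = O(\log n)$ time, the rank query returning $x_j^{\mathcal I(i,k)}$ for any $j$, the count queries $|L(\mathcal I(i,k),x)|$ and $|R(\mathcal I(i,k),x)|$ for a queried $x$, and it supports inserting the two endpoints of $I_{i+k-1}$ and deleting the two endpoints of $I_{i-1}$ when advancing from window $i-1$ to window $i$. For the first window, $D(\mathcal I(1,k), x_k^{\mathcal I(1,k)})$ is computed directly in $O(k)$ time. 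Each subsequent step is then two moves: \Cref{lem:k+1} converts the value at the $k$-th endpoint of a window into the value at its $(k+1)$-th endpoint, and \Cref{lem:ge} (when $x_k^{\mathcal I(i,k)} \ge x_{k+1}^{\mathcal I(i-1,k)}$) or \Cref{lem:<} (otherwise) converts the value at $x_{k+1}^{\mathcal I(i-1,k)}$ for window $i-1$ into the value at $x_k^{\mathcal I(i,k)}$ for $\mathcal I(i,k) = (\mathcal I(i-1,k)\setminus\{I_{i-1}\})\cup\{I_{i+k-1}\}$; the precondition $c(I_{i-1}) \le c(I_{i+k-1})$ holds because $\mathcal I$ is sorted by centre. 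Each update formula is $O(1)$ arithmetic once the tree supplies the relevant ranks, cardinalities, and the two distances $d_{I_{i+k-1}}$, $d_{I_{i-1}}$ evaluated at $x_{k+1}^{\mathcal I(i-1,k)}$. Summing over the $n-k+1$ windows yields $O(n\log n)$, which dominates the initial sort and hence the total.

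The main obstacle I anticipate is the bookkeeping at the window transition. One must verify that \Cref{lem:ge,lem:<} are applied with the correct identifications (the deleted interval playing the role of $I_1$ and the inserted interval that of $I_{n+1}$ in their statements), that \Cref{lem:noendpoints} guarantees no endpoint of $\mathcal I(i-1,k)\cup\{I_{i+k-1}\}$ lies strictly between $x_{k+1}^{\mathcal I(i-1,k)}$ and $x_k^{\mathcal I(i,k)}$ so the cardinality adjustments are exactly those claimed, and that ties among interval centres are handled (as reduced away in the proof of \Cref{thm:complete_linear}) so that the ranks read off the order-statistic tree are unambiguous. Once these are settled, composing \Cref{lem:k+1} with \Cref{lem:ge} or \Cref{lem:<} along the chain of windows produces $D(\mathcal I(i,k), x_k^{\mathcal I(i,k)})$ for every $i$ within the stated bound, completing the proof.
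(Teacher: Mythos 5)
Your proposal is correct and follows essentially the same route as the paper: correctness via \Cref{lem:kconsec_kclique}, \Cref{thm:ogp_median} and the update formulas \Cref{lem:k+1,lem:ge,lem:<} applied window by window, and the $O(n\log n)$ bound from an initial sort plus $O(\log n)$-time rank/count queries on a binary search tree per window. You merely spell out details the paper leaves implicit (the order-statistic tree, the role of \Cref{lem:noendpoints}, and tie-handling among centres), which is consistent with the paper's argument.
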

\begin{proof}
    We show the complexity of \Cref{alg:kclique-2}.
    First, the sorting can be done in $O(n\log n)$ time using any well-known sorting algorithm.
    We sweep the sorted $\I$ and determine $\E(\I)$ in $O(n)$ time.
    Similarly, we determine $\imin$ and calculate $\dmin$ in $O(n)$ time.
    We use a doubly linked list to store $\imin$ and duplicate this list into $\icur$.
    We maintain two extra pointers that point to the first and last node of $\icur$.
    Whenever we enter the loop in \cref{alg2:while_subcol}, we update $\icur$ and $\dcur$ in $O(1)$ time by removing the first node and adding a node next to the last node.
    The update of $\dcur$ is also done in $O(1)$ time.
    Hence, the total complexity of the loop is bounded by $O(n)$ since we add and remove each interval at most once.
    The comparison of $\dcur$ and $\dmin$ can be done in constant time.
    We only need to check \cref{alg2:tmd_update}, where we apply the formula given in \Cref{lem:eitoeiplusone}.
    For applying the formula, we need the values of $\size{L(\imin,e_{i+1})}$ and $\size{R(\imin,e_{i})}$.
    We do this by counting the intervals on the left and right sides of each $e_i$.
    In general, if the $\alpha$th interval is the closest interval to $e_{i+1}$ strictly on the left (not intersecting $e_{i+1}$) and the $\beta$th interval is the first interval in $\imin$, then $\size{L(\imin,e_{i+1})} = \alpha - \beta +1$.
    Similarly, if the $\alpha$th interval is the closest interval to $e_{i}$ strictly on the right and the $\beta$th interval is the last interval in $\imin$, then $\size{R(\imin,e_{i})} = \beta - \alpha +1$.
    The loop of \cref{alg2:while} ends after iterating over $\E(\I)$, which is of size $2n$.
    Therefore, the algorithm runs in $O(n\log n)$ time.
\end{proof}

We remark that the above theorem exclusively applies to unit interval graphs where the moving distance function is uniform. 
For unweighted interval graphs, we cannot assume that an optimal solution consists of $k$ consecutive centre-ordered intervals, since the intervals have arbitrary lengths. 
For instance, an interval located beyond this consecutive list may intersect with the gathering point if it has sufficient length to reach it. 
This case also occurs when we are given a weighted unit interval graph.
Consequently, \Cref{lem:kconsec_kclique} does not hold for interval graphs or weighted unit interval graphs, and other techniques are required to solve this case. 
Moreover, the uniformity of the moving distance function allows us to apply \Cref{thm:ogp_median}, otherwise the result of \Cref{lem:kconsec_kclique} would also not hold. 

\section{LP-formulations for Other Fundamental Graph Properties on Unit Interval Graphs}
\label{sec:lp}
This section addresses the remaining classes $\edgeless$, $\acyclic$, $\nokclique$, and $\kconnected$ on unit interval graphs. 
We show that the problems for obtaining graphs in these classes can be formulated using linear programming. 
This implies that these problems can be solved using polynomial-time algorithms for linear programming~\cite{Karmarkar1984} assuming the arithmetic model~\cite{Groetschel1993}.

In the following subsections, we use the absolute value in the objective function to express the formulations. 
This formulation differs from the formulation in (\ref{def:linear_program}), but we assert that a formulation with an absolute value in the objective function can easily be reformulated as a linear program.
For each variable $x_i$ of the objective function in an absolute value, we add two new variables $x_i'$ and $x_i''$ and replace $x_i$ by $x_i' - x_i''$ and $|x_i|$ by $x_i' + x_i''$, for $1\le i \le n$. 
We also add two constraints $x_i' \ge 0$ and $x_i'' \ge 0$ to $\C$. 
The optimal solutions after reformulating the program are the same if at least one of the values $x'_i$ and $x''_i$ is zero. That is, $x_i = x'_i$ if $x_i\geq 0 $ and $x_i = -x''_i$ if $x_i \leq 0$. For instance, if $x'_i, x''_i >0$ and we take $\delta = \min\{x'_i,x''_i\}$, then $x'_i - x''_i = (x'_i-\delta) - (x''_i-\delta)$ but $x'_i + x''_i$ is reduced. Consequently, a solution where $x'_i, x''_i >0 $ contradicts the minimisation of the linear function.

Linear programs only admit weak constraints ($\le$, $\ge$, $=$), as strict constraints ($<$, $>$, $\neq$) lead to formulations with no optimal solution.
To address this, we add an infinitesimal value $\varepsilon> 0$ to formulations that contain strict inequalities.
We treat $\varepsilon$ as a symbolic infinitesimal parameter rather than calculating an explicit value.
By applying this adjustment, we obtain a set of weak linear constraints equivalent to the original set of constraints under the assumption that $\varepsilon$ is positive but arbitrarily small.
Alternatively, one may assume that all intervals are open, in which case we can omit the value $\varepsilon$ and use weak constraints.\footnote{For instance, this assumption is also made in~\cite{fomin2023}, where unit disks are assumed to have no intersection when \emph{touching}; that is, when the distance between their centres is exactly $2$.}

\subsection{The case \texorpdfstring{$\Pi = \edgeless$}{}}
\label{ssec:edgeless}
    We show that {\gged} can be formulated as a linear problem for obtaining a graph in $\edgeless$. 
    Let $\I = \{I_1,\ldots,I_n\}$ be a collection of unit intervals such that $c(I_i)~\leq~c(I_{i+1})$ for all $1\leq i \leq n-1$. 
    We move each interval so that every pair of intervals has no intersections.
    The order of $\I$ induced by interval centres might change when the intervals are moved so that $G(\I)$ is in $\Pi$, with total moving distance $D$.
    However, the following lemma ensures that the intervals in $\I$ can be moved so that $G(\I)$ is in $\Pi$ by at most $D$ while preserving order.
    \begin{lemma}\label{lem:edgeless:same}
        Let $D = (d_1, \ldots, d_n)$ be a vector describing the distances applied to each interval, and let $\mathcal{I}^{D}$ be the resulting collection with $c(I_i^{D}) = c(I_i) + d_i$. 
        If there exists a pair of intervals $I_i, I_j$ with $c(I_i) \le c(I_j)$ but $c(I_i) + d_i \ge c(I_j) + d_j$, then there exists a vector $D' = (d_1', \ldots, d_n')$ such that $\I^{D'} = \I^{D}$ as a multiset, the order between $I_i$ and $I_j$ is preserved, and $\sum_{k=1}^{n} |d_k'| \le \sum_{k=1}^{n} |d_k|$.
    \end{lemma}
    \begin{proof}
        Let $I$ and $J$ be two intervals in $\I$ such that $c(I) \le c(J)$.
        Since $c(I) \le c(J)$, $d_i \ge d_j$ is also true.
        In this case, by replacing $d_i$ and $d_j$ with $c(J) - c(I) + d_j$ and $c(I) - c(J) + d_i$,
        we obtain $c(I) + (c(J) - c(I) + d_j) = c(J) + d_j$ and $c(J) + (c(I) - c(J) + d_i) = c(I) + d_i$, respectively.
        Hence the collection of intervals is the same.

        We show that $|d_i| + |d_j| \ge |c(I) - c(J) + d_i| + |c(J) - c(I) + d_j|$.
        We consider the following cases: (i) $c(I) - c(J) + d_i \ge 0$ and $c(J) - c(I) + d_j \ge 0$, (ii) $c(I) - c(J) + d_i \ge 0$ and $c(J) - c(I) + d_j  <  0$, (iii) $c(I) - c(J) + d_i < 0 $ and $c(J) - c(I) + d_j < 0$, and (iv) $c(I) - c(J) + d_i < 0$ and $c(J) - c(I) + d_j \ge 0$.
        Suppose that case (i) is true.
        In this case, it implies that $|c(I) - c(J) + d_i| = c(I) - c(J) + d_i$ and,
        \begin{align*}
            |c(I) - c(J) + d_i| + |c(J) - c(I) + d_j| = d_i + d_j \le |d_i| + |d_j|
        \end{align*}
        is true. Now suppose that case (ii) is true. Then,
        \begin{align*}
            |c(I) - c(J) + d_i| + |c(J) - c(I) + d_j| = d_i - d_j + 2(c(I) - c(J))
        \end{align*}
        is true.
        Since $c(J) - c(I) \ge 0$ follows from condition (1), 
        $d_j < 0$ is also true because we are considering case (ii).
        Consequently, $|d_i| + |d_j| = d_i - d_j$ and $d_i - d_j \ge d_i - d_j + 2(c(I) - c(J))$ holds.
        
        Suppose that case (iii) is true. Then,
        \begin{align*}
            |c(I) - c(J) + d_i| + |c(J) - c(I) + d_j| = -d_i - d_j \le |d_i| + |d_j|
        \end{align*}
        is true. Lastly, if case (iv) is true, then it follows that:
        \begin{align*}
            |c(I) - c(J) + d_i| + |c(J) - c(I) + d_j| = 2(c(J) - c(I)) - d_i + d_j.
        \end{align*}
        We first show $d_i \ge 0$ holds by contradiction.
        If $d_i < 0$, then $d_j <0$ must also be true; otherwise it would contradict the fact that $d_i \ge d_j$ which follows from the lemma statement.
        Moreover, $d_j \ge c(I) - c(J)$ holds from the assumption of case (iv) and $d_j \le c(I) - c(J) + d_i$ holds from condition (1), but this is a contradiction to $d_i < 0$ since it implies that $c(I) - c(J) \le c(I) - c(J) + d_i$.
        Hence $d_i \ge 0 $ must be true.
        Since $d_j \le c(I) - c(J) + d_i < 0$ follows from $c(I) + d_i \ge c(J) + d_j$ and case (iv), $d_j < 0 $ must be true. Then,
        \begin{align*}
            2(c(J) - c(I)) - d_i + d_j \le |d_i| + |d_j| = d_i - d_j
        \end{align*}
        is true since it gives $2(c(J) - c(I) - d_i + d_j) \le 0 $ and $c(J) + d_j \le c(I) + d_i$, which is true since $c(I) + d_i \ge c(J) + d_j$. Therefore, the inequality $|d_i| + |d_j| \ge |c(I) - c(J) + d_i| + |c(J) - c(I) + d_j|$ holds in all cases.
    \end{proof}  

    \Cref{lem:edgeless:same} implies that $\I$ has no pair of intervals that intersect when the following inequality holds:
    $(c(I_{i+1})+d_{i+1}) - (c(I_{i})+d_{i}) > 1 \text{ for } 1 \leq i \leq n-1$.
    It follows that the following linear program with $O(n)$ constraints can be defined as follows:
    \begin{equation*}
        \begin{array}{llcl@{}l}
        \text{minimise}  & \displaystyle\sum\limits_{i=1}^{n} |x_i| & \\
        \text{subject to}& (x_{i+1} + c(I_{i+1})) - (x_i + c(I_i))  - \varepsilon \geq  1,\:1\le i \le n-1.
        \end{array}
        \label{equ:edgeless_graph_0}
    \end{equation*}

    \begin{theorem}
        A graph in $\edgeless$ can be optimally found in polynomial time by linear programming.
    \end{theorem}
    \begin{proof}
        We show that the above linear program gives an optimal solution for $\edgeless$.
        First, given a collection of $n$ unit intervals $\I$, let $D = (d_1,\ldots,d_n)$ be a vector that describes the distances added to each interval. We set $I^D$ as the collection such that $c(I^D_i) = c(I_i) + d_i$, for all $I^D_i \in \I^D$, $1 \leq i \leq n$, and $G(\I^D)$ is in $\edgeless$.
        We assume that $D$ minimises the total moving distance.
        Since every pair of intervals in $\I^D$ does not intersect,
        the inequality $c(I^D_{i+1}) - c(I^D_i) = c(I_{i+1}) + d_{i+1} -(c(I_i) +d_i) \ge 1 + \varepsilon$ holds for $1\leq i \leq n-1$ and $\varepsilon \in \mathbb{R}^+$. 
        In other words, the constraint given in the linear program is satisfied by $D$.
        Hence $D$ is a feasible solution for the linear program.

        In the other direction, let $X = (x_1,\ldots,x_n)$ be an output of the linear program for an arbitrary collection of unit intervals $\I$. 
        By \Cref{lem:edgeless:same}, the order of $\I^X$ is the same as $\I$.
        Thus $G(\I^X)$ is in $\edgeless$.
        Suppose that $X$ does not minimise the total moving distance.
        That is, there exists a vector $X\neq D = (d_1, \ldots, d_n)$ that minimises the total moving distance $\sum_{i=1}^{n} |d_i|$. 
        Since every pair of intervals in $\I^D$ does not intersect,
        $D$ is a feasible solution for the linear program. 
        Moreover, $\sum_{i=1}^{n} |d_i| < \sum_{i=1}^{n} |x_i|$, which contradicts the output of the linear program. 
        Therefore, $X$ minimises the total moving distance so that the resulting graph is in $\edgeless$. 
        This concludes the proof.
    \end{proof}

\subsection{The case \texorpdfstring{$\Pi = \acyclic$}{}}
    Let $\I = \{I_1,\ldots,I_n\}$ be a collection of unit intervals.
    We want to move the intervals in $\I$ so that the resulting intersection graph is in $\acyclic$.
    A graph is \emph{chordal} if it has no induced cycle with four or more vertices.
    Since interval graphs are chordal~\cite{Jungnickel2013}, we can take any 3-tuple from $\I$ of the form $(I_{i-1},I_i,I_{i+1})$ and check whether their corresponding vertices form a cycle in $G(\I)$.
    From \Cref{lem:edgeless:same}, we can assume that there is an optimal solution that preserves the order of $\I$.
    Consequently, moving the intervals to satisfy the condition: $c(I_{i+2}) - c(I_i) > 1$ for $1\leq i \leq n-2$ ensures that the resulting unit interval graph is in $\acyclic$.
    This gives the following $O(n)$-size linear program:
    \begin{equation*}
        \begin{array}{llcl@{}l}
        \text{minimise}  & \displaystyle\sum\limits_{i=1}^{n} |x_i| & \\
        \text{subject to}& (x_{i+2} + c(I_{i+2})) - (x_i + c(I_i))  -\varepsilon \geq  1,\:1\le i \le n-2.\\
        \end{array}
    \end{equation*}

    \begin{theorem}
        A graph in $\acyclic$ can be optimally found in polynomial time by linear programming.
    \end{theorem}
    \begin{proof}
        We show that the above formulation gives an optimal solution for $\acyclic$.
        First, given a collection of $n$ unit intervals $\I$, let $D = (d_1,\ldots,d_n)$ be a vector that describes the distances added to each interval. We set $I^D$ as the collection such that $c(I^D_i) = c(I_i) + d_i$, for all $I^D_i \in \I^D$, $1 \leq i \leq n$, and $G(\I^D)$ is in $\acyclic$.
        Suppose that $D$ minimises the total moving distance.
        Since $G(\I^D)$ is acyclic, $G(\I^D)$ has no triangles. 
        Moreover, from \Cref{lem:edgeless:same}, we can assume that $D$ preserves the order of $\I$ and consequently $c(I^D_{i+2}) - c(I^D_i) \ge 1 + \varepsilon$ holds, for $1 \le i \le n-2$ and $\varepsilon\in \mathbb{R}^+$.
        Thus the constraint given in the linear program is satisfied by $D$. This implies that $D$ is a feasible solution.

        In the other direction, let $X = (x_1,\ldots,x_n)$ be an output given by the linear program for an arbitrary collection $\I$.
        From \Cref{lem:edgeless:same}, the order of the intervals remains unchanged.
        Since $X$ satisfies the constraints, $G(\I^X)$ is in $\acyclic$.
        We prove that $X$ minimises the total moving distance.
        Suppose the contrary; there is a vector $X\neq D = (d_1,\ldots,d_n)$ such that $\sum_{i=1}^n |d_i| < \sum_{i=1}^n |x_i|$. 
        The vector $D$ is a feasible solution as was shown in the previous argument. 
        In other words, the linear program would output $D$ instead of $X$, as $\sum_{i=1}^n |d_i| < \sum_{i=1}^n |x_i|$ holds. 
        This contradicts the definition of $X$. 
        Therefore, $X$ minimises the total moving distance for $\acyclic$.
    \end{proof}
\subsection{The case \texorpdfstring{$\Pi = \nokclique$}{}}
    We generalise the idea presented for obtaining an acyclic graph and show a linear program to obtain a graph in $\nokclique$ with minimum total moving distance.
    To characterise unit interval graphs with no $k$-cliques, we first show \Cref{lem:kcliqueiff}:
    \begin{lemma}
        Given a collection of $n$ unit intervals $\I$, a $k$-clique exists in $G(\I)$ if and only if there exists an $1 \le i \le n-k+1$ such that $c(I_{i+k-1}) - c(I_i) \le 1$.
        \label{lem:kcliqueiff}
    \end{lemma}
    \begin{proof}
        Suppose that there exists a $k$-clique $K_k$ in $G(\I)$. 
        The clique $K_k$ consists of $k$ vertices, thus $k$ intervals intersect in $\I$. 
        Let $I$ and $J$ be the leftmost and rightmost intervals that correspond to the vertices of $K_k$, respectively.
        We also assume that $I$ and $J$ are the $i$th interval and the $j$th interval in $\I$, respectively.
        The inequality $c(J) - c(I) \le 1$ is true since $I$ and $J$ intersect. 
        Since $\I$ is a collection of unit intervals,
        all intervals with index between $i$ and $j$ also intersect with $I$ and $J$.
        Hence, $j = i + k - 1$ holds.

        In the other direction, we assume that $c(I_{i+k-1}) - c(I_i) \le 1$ holds for $1 \le i \le n-k+1$.
        By the index of both intervals, there are other $k-2$ intervals between $I_{i}$ and $I_{i+k-1}$. Since $c(I_{i+k-1}) - c(I_i) \le 1$, the distance between all intervals $I_{i},I_{i+1},\ldots,I_{i+k-1}$ is also at most $1$.
        It follows that this sequence forms a $k$-clique.
    \end{proof}

    By \Cref{lem:kcliqueiff}, there is a $k$-clique when the distance between $c(I_i)$ and $c(I_{i+k-1})$ is at most $1$, for a $1\leq i \leq n-k+1$. 
    Conversely, $G(\I)$ does not contain a $k$-clique when $I_i$ and $I_{i+k-1}$ do not intersect for all $1\leq i \leq n-k+1$.
    Thus we check whether the constraint $c(I_{i+k-1}) - c(I_i) > 1,\: 1\leq i \leq n-k+1$ is true on every such pair contained in $\I$.
    It leads to the following $O(n)$-size linear program:
    \begin{equation*}
        \begin{array}{llcl@{}l}
        \text{minimise}  & \displaystyle\sum\limits_{i=1}^{n} |x_i| &\\
        \text{subject to}& (c_{i+k-1}+x_{i+k-1})-(c_i+x_i) - \varepsilon  \geq  1,\:1\le i \le n-(k-1).
        \end{array}
        \label{equ:nokclique_graph_0}
    \end{equation*}
    \begin{theorem}
        A graph in $\nokclique$ can be optimally found in polynomial time by linear programming.
    \end{theorem}
    \begin{proof}
        We show that the above formulation gives an optimal solution for obtaining a graph in $\nokclique$ in a similar way to the previous formulations.
        First, given a collection of $n$ unit intervals $\I$, let $D = (d_1,\ldots,d_n)$ be a vector that describes the distances added to each interval. We set $I^D$ as the collection such that $c(I^D_i) = c(I_i) + d_i$, for all $I^D_i \in \I^D$, $1 \leq i \leq n$, and $G(\I^D)$ is in $\nokclique$.
        We assume that $D$ minimises the total moving distance.
        Since $G(\I^D)$ does not contain a $k$-clique, $c(I^D_{i+k-1}) - c(I^D_i) \ge 1 + \varepsilon$ holds for $1 \le i \le n-(k-1)$ and $\varepsilon\in \mathbb{R}^+$ by \Cref{lem:kcliqueiff}.
        Hence $D$ is a feasible solution of the linear program.

        In the other direction, let $X = (x_1,\ldots,x_n)$ be an output given by the linear program for an arbitrary collection of $n$ unit intervals $\I$.
        From \Cref{lem:edgeless:same}, we can assume that the order of $\I^X$ does not change.
        Since $X$ satisfies the constraints, $G(\I^X)$ is in $\nokclique$.
        We show that $X$ minimises the total moving distance.
        Suppose the contrary; there is a vector $D = (d_1,\ldots,d_n)$ such that $\sum_{i=1}^n |d_i| < \sum_{i=1}^n |x_i|$. 
        Since $G(\I^D)$ has no $k$-cliques and $D$ preserves the order of $\I$, $D$ is also a feasible solution.
        In other words, the linear program would output $D$ instead of $X$ as $\sum_{i=1}^n |d_i| < \sum_{i=1}^n |x_i|$ holds. 
        This contradicts the definition of $X$. 
        Therefore, $X$ minimises the total moving distance.
    \end{proof}

\subsection{The case \texorpdfstring{$\Pi = \kconnected$}{}}
    In this section, we show a linear program to move unit intervals so that the resulting intersection graph is in $\kconnected$.
    We first show the following lemma:
    \begin{lemma}
        Given a collection of $n$ unit intervals $\I$, $G(\I)$ is $k$-connected if and only if the intervals $I_i$ and $I_{i+k}$ intersect, for all $1\le i \le n-k$.
        \label{lem:kconnectediff}
    \end{lemma}
    \begin{proof}
        Suppose that a pair of intervals $I_i$ and $I_{i+k}$ are disjoint.
        Then by removing the $k-1$ intervals $I_{i+1}, \ldots, I_{i+k-1}$, the graph $G((I_1,\ldots,I_{i},I_{i+k},\ldots,I_n))$ becomes disconnected since $I_i$ and $I_{i+k}$ are disjoint and any interval that could bridge them would lie strictly between these intervals.
        Suppose now that $G(\I)$ is not $k$-connected. 
        Then there are at most $k-1$ vertices (thus, intervals) whose removal makes $G(\I)$ disconnected.
        Without loss of generality, assume that $I_i$ and $I_j$ with $c(I_i)\le c(I_j)$ are in different components after removing the intervals.
        The intervals strictly between $I_i$ and $I_j$, $I_{i+1}, \ldots, I_{j-1}$, must all belong to the removed intervals. 
        It follows that $j-i -1\le k-1$, giving $j\le i+k$.
        Consequently $I_i$ and $I_{i+k}$ are disjoint.
        This completes the proof.
    \end{proof}
    
    Using the above lemma, a graph in $\kconnected$ can be obtained in polynomial time by the following linear program:
    \begin{equation*}
        \begin{array}{llcl@{}l}
        \text{minimise}  & \displaystyle\sum\limits_{i=1}^{n} |x_i| \\
        \text{subject to}& (c_{i+k}+x_{i+k})-(c_i+x_i)  \le 1,\:1\le i \le n-k.
        \end{array}
        \label{equ:kconnected_graph_0}
    \end{equation*}
    \begin{theorem}
        A graph in $\kconnected$ can be optimally found in polynomial time by linear programming.
    \end{theorem}
    \begin{proof}
        We show that the above formulation gives an optimal solution for $\kconnected$ similar to the previous formulations. 
        First, given a collection of $n$ unit intervals $\I$, let $D = (d_1,\ldots,d_n)$ be a vector that describes the distances added to each interval. 
        We set $I^D$ as the collection such that $c(I^D_i) = c(I_i) + d_i$, for all $I^D_i \in \I^D$, $1 \leq i \leq n$, and $G(\I^D)$ is in $\kconnected$.
        First suppose that $D$ minimises the total moving distance.
        Since $G(\I^D)$ is $k$-connected, $(c_{i+k}+d_{i+k})-(c_i+d_i) \le 1$ holds for all $1 \le i \le n-k$ by \Cref{lem:kconnectediff}.
        Hence $D$ satisfies the constraints of the linear program.
        This implies that $D$ is a feasible solution.

        In the other direction, let $X = (x_1,\ldots,x_n)$ be an output given by the linear program for a collection of unit intervals $\I$.
        From \Cref{lem:edgeless:same}, we can assume that $X$ preserves the order of $\I$.
        Since $X$ satisfies the constraints, $G(\I^X)$ is in $\kconnected$.
        We show that $X$ minimises the total moving distance.
        Suppose the contrary; there is a vector $D = (d_1,\ldots,d_n)$ such that $\sum_{i=1}^n |d_i| < \sum_{i=1}^n |x_i|$.
        From \Cref{lem:kconnectediff}, $\I^D$ satisfies the constraints $(c_{i+k}+d_{i+k})-(c_i+d_i) \le 1$ for all $1 \le i \le n-k$.
        Hence $D$ is a feasible solution.
        In other words, the linear program would output $D$ instead of $X$ as $\sum_{i=1}^n |d_i| < \sum_{i=1}^n |x_i|$ holds.
        This contradicts the definition of $X$. 
        Therefore, $X$ minimises the total moving distance.
    \end{proof}

\section{Concluding Remarks}
\label{sec:conclusion}

The main contribution of this paper is a model for graph modification that takes a geometric approach to editing geometric intersection graphs, which we call {\gged}. 
We presented several algorithmic results for fundamental graph classes, mainly on interval graphs, as summarised in \Cref{tab:summary}. 

First, we showed an $O(n)$-time algorithm that, given a weighted interval graph represented by its collection of intervals, moves the intervals so that the resulting intersection graph is complete, using the minimum total moving distance.
With this result, we also proved that a complete graph can be obtained in $O(n)$-time on collections of $n$ unit squares over the $L_{1}$ distance.
We also illustrated the contrast for cases where the moving distance is uniform, showing that in those cases the optimal gathering points are always the $n$th and $(n+1)$th endpoints. 
Using this property, we presented a $O(n\log n)$-time algorithm that, given a collection of $n$ unit intervals, moves the intervals so that the resulting graph contains a $k$-clique, using the minimum moving distance.
In addition, we described several $O(n)$-size linear programs to obtain graphs in distinct graph classes on unit interval graphs in polynomial time. 

The model and results presented provide several potential directions for future research. 
The complexity of {\gged} on more complex intersection graphs to obtain graphs in the presented and other graph classes remains open.
For instance, given a collection of $n$ disks, what is the complexity of moving the disks so that the resulting intersection graph is in $\Pi$?
Related works~\cite{fomin2023,Fomin2025} suggest that the problem becomes intractable for objects of higher dimensions.
For collections of intervals, the graph classes presented admit purely geometric characterisations (that is, they can be expressed as conditions on pairs of intervals) and our algorithms exploit these characterisations directly, bypassing the graph representation entirely.
However, such geometric characterisations are much harder to obtain for more complex objects such as disks or squares.
This suggests that combining graph-theoretic techniques with geometric edit operations remains necessary to obtain graphs contained in the graph classes presented.

A natural improvement for our results is to generalise \Cref{alg:kclique-2} to weighted interval graphs.
\Cref{alg:kclique-2} exploits the fact that an optimal solution can be obtained even if the order of the intervals is preserved for unweighted moving distance functions.
However, this assumption cannot be made when the intervals have distinct weights.
Although we make some preliminary progress by showing that the problems in \Cref{sec:lp} are solvable in polynomial time, obtaining combinatorial algorithms for these cases would also be of interest.



\bibliographystyle{elsarticle-num} 
\bibliography{ref}

@Book{Jungnickel2013,
  author     = {Dieter Jungnickel},
  title      = {Graphs, Networks and Algorithms},
  doi        = {10.1007/978-3-642-32278-5},
  publisher  = {Springer Berlin Heidelberg},
  year       = {2013},
}

@Book{cormen2009,
  author    = {Cormen, Thomas H. and Leiserson, Charles E. and Rivest, Ronald L. and Stein, Clifford},
  year      = {2009},
  title     = {Introduction to Algorithms},
  edition   = {3},
  location  = {Cambridge, New York, United States},
  publisher = {The MIT Press},
}

@article{Blum1973,
  doi = {10.1016/s0022-0000(73)80033-9},
  year = {1973},
  month = aug,
  publisher = {Elsevier {BV}},
  volume = {7},
  number = {4},
  author = {Manuel Blum and Robert W. Floyd and Vaughan Pratt and Ronald L. Rivest and Robert E. Tarjan},
  title = {Time bounds for selection},
  journal = {Journal of Computer and System Sciences},
}

@misc{fomin2023,
  doi = {10.48550/ARXIV.2308.07099},
  author = {Fomin,  Fedor V. and Golovach,  Petr A. and Inamdar,  Tanmay and Saurabh,  Saket and Zehavi,  Meirav},
  keywords = {Computational Geometry (cs.CG),  FOS: Computer and information sciences,  FOS: Computer and information sciences},
  title = {Kernelization for Spreading Points},
  publisher = {arXiv},
  year = {2023},
  copyright = {Creative Commons Attribution 4.0 International}
}

@book{Diestel:BOOK:2010,
  title={Graph Theory},
  author={Diestel, Reinhard},
  year={2010},
  edition={4th},
  series={Graduate Texts in Mathematics},
  volume={173},
  publisher={Springer-Verlag Berlin Heidelberg}
}

@Book{boyd2004,
  author    = {Boyd, Stephen and Vandenberghe, Lieven},
  title     = {Convex Optimization},
  publisher = {Cambridge University Press},
  address   = {Cambridge, England},
  groups    = {Representatives},
  month     = mar,
  year      = {2004},
}

@Article{Chung1994,
  author     = {Fan R. K. Chung and David Mumford},
  journal    = {J. Comb. Theory Ser. B},
  title      = {Chordal {Completions} of {Planar} {Graphs}},
  year       = {1994},
  doi        = {10.1006/jctb.1994.1056},
  number     = {1},
  pages      = {96--106},
  volume     = {62},
  groups     = {(T) Graph-Modification, (T) Referenced, (T) GMPs - Applications, Graph Modification},
  optchecked = {2025-01-14},
}

@Article{Hoang2023,
  author     = {Hung P. Hoang and Stefan Lendl and Lasse Wulf},
  journal    = {Discrete Applied Mathematics},
  title      = {Assistance and interdiction problems on interval graphs},
  year       = {2023},
  doi        = {10.1016/j.dam.2023.06.046},
  pages      = {153--170},
  volume     = {340},
  file       = {ScienceDirect Full Text PDF:https\://www.sciencedirect.com/science/article/pii/S0166218X23002640/pdfft?download=true:application/pdf},
  groups     = {(T) Graph-Modification, (T) Maybe-related, (T) Referenced, (T) GMPs - Applications, Graph Modification},
  optchecked = {2025-01-14},
  priority   = {prio1},
}

@Article{Hellmuth2020,
  author     = {Marc Hellmuth and Manuela Gei{\ss} and Peter F. Stadler},
  journal    = {Theoretical Computer Science},
  title      = {Complexity of modification problems for reciprocal best match graphs},
  year       = {2020},
  doi        = {10.1016/j.tcs.2019.12.033},
  pages      = {384--393},
  volume     = {809},
  file       = {:Hellmuth2020.pdf:PDF},
  groups     = {(T) Graph-Modification, (T) Referenced, (T) GMPs - Applications, Graph Modification},
  optchecked = {2025-01-14},
}

@Article{Lewis1980,
  author     = {John M. Lewis and Mihalis Yannakakis},
  journal    = {Journal of Computer and System Sciences},
  title      = {The Node-Deletion Problem for Hereditary Properties is {NP-Complete}},
  year       = {1980},
  doi        = {10.1016/0022-0000(80)90060-4},
  number     = {2},
  pages      = {219--230},
  volume     = {20},
  file       = {ScienceDirect Full Text PDF:https\://www.sciencedirect.com/science/article/pii/0022000080900604/pdf?md5=d9925a8272cebb60919700c4e749d427&pid=1-s2.0-0022000080900604-main.pdf&isDTMRedir=Y:application/pdf},
  groups     = {Tesis-Master, (T) Graph-Modification, (T) Referenced, (T) GMPs - vertex, Graph Modification},
  optchecked = {2025-01-16},
}

@Article{Burzyn2006,
  author     = {Pablo Burzyn and Flavia Bonomo and Guillermo Dur{\'{a}}n},
  journal    = {Discrete Applied Mathematics},
  title      = {{NP}-completeness results for edge modification problems},
  year       = {2006},
  doi        = {10.1016/j.dam.2006.03.031},
  number     = {13},
  pages      = {1824--1844},
  series     = {Traces of the {Latin} {American} {Conference} on {Combinatorics}, {Graphs} and {Applications}},
  volume     = {154},
  file       = {:burzyn2006.pdf:PDF},
  groups     = {(T) Graph-Modification, (T) GMPs - edge, (T) Referenced, Graph Modification},
  optchecked = {2025-01-14},
}

@InProceedings{Fomin2015,
  author     = {Fomin, Fedor V. and Saurabh, Saket and Misra, Neeldhara},
  booktitle  = {Frontiers in {Algorithmics}},
  title      = {Graph Modification Problems: A Modern Perspective},
  doi        = {10.1007/978-3-319-19647-3\_1},
  pages      = {3--6},
  publisher  = {Springer International Publishing},
  file       = {:fomin2015.pdf:PDF},
  groups     = {(T) Graph-Modification, Graph Modification},
  priority   = {prio1},
  readstatus = {skimmed},
  year       = {2015},
}

@Article{Sritharan2016,
  author  = {Sritharan, R.},
  journal = {Journal of Discrete Algorithms},
  title   = {Graph modification problem for some classes of graphs},
  year    = {2016},
  doi     = {10.1016/j.jda.2016.06.003},
  pages   = {32--37},
  volume  = {38-41},
  file    = {:Sritharan2016.pdf:PDF},
  groups  = {(T) Graph-Modification, Graph Modification},
}

@Article{Li2009,
  author    = {Li, Heng and Durbin, Richard},
  title     = {Fast and accurate short read alignment with Burrows–Wheeler transform},
  doi       = {10.1093/bioinformatics/btp324},
  number    = {14},
  pages     = {1754--1760},
  volume    = {25},
  journal   = {Bioinformatics},
  publisher = {Oxford University Press (OUP)},
  year      = {2009},
}

@Article{Panolan2024,
  author  = {Fahad Panolan and Saket Saurabh and Meirav Zehavi},
  journal = {{ACM} Trans. Algorithms},
  title   = {Contraction Decomposition in Unit Disk Graphs and Algorithmic Applications in Parameterized Complexity},
  year    = {2024},
  doi     = {10.1145/3648594},
  number  = {2},
  pages   = {15},
  volume  = {20},
  file    = {:Panolan2019.pdf:PDF},
  groups  = {(T) Graph-Modification, (T) Referenced, (T) GMPs - Geometric Intersection Graphs, Graph Modification},
}

@Article{Berg2019,
  author     = {Mark de Berg and S{\'{a}}ndor Kisfaludi{-}Bak and Gerhard J. Woeginger},
  journal    = {Theoretical Computer Science},
  title      = {The complexity of Dominating Set in geometric intersection graphs},
  year       = {2019},
  doi        = {10.1016/j.tcs.2018.10.007},
  pages      = {18--31},
  volume     = {769},
  file       = {:Berg2019.pdf:PDF},
  groups     = {(T) Graph-Modification, (T) Referenced, (T) GMPs - Geometric Intersection Graphs, Graph Modification},
  optchecked = {2025-01-14},
}

@Book{McKee1999,
  author     = {Terry A. McKee and F. R. McMorris},
  title      = {Topics in Intersection Graph Theory},
  doi        = {10.1137/1.9780898719802},
  isbn       = {9780898714302},
  publisher  = {Society for Industrial and Applied Mathematics},
  series     = {Discrete Mathematics and Applications},
  file       = {:books/Topics in intersection graph theory.pdf:PDF},
  groups     = {(T) Referenced, Intersection Graphs},
  optchecked = {2025-01-16},
  year       = {1999},
}

@Article{Crespelle2023,
  author     = {Christophe Crespelle and P{\aa}l Gr{\o}n{\aa}s Drange and Fedor V. Fomin and Petr A. Golovach},
  journal    = {Computer Science Review},
  title      = {A survey of parameterized algorithms and the complexity of edge modification},
  year       = {2023},
  doi        = {10.1016/j.cosrev.2023.100556},
  pages      = {100556},
  volume     = {48},
  file       = {:crespelle2023.pdf:PDF},
  groups     = {(T) Graph-Modification, (T) GMPs - edge, (T) Referenced, Graph Modification},
  optchecked = {2025-01-14},
  priority   = {prio1},
  readstatus = {skimmed},
}

@PhdThesis{Drange2015,
  author     = {P{\aa}l Gr{\o}n{\aa}s Drange},
  title      = {Parameterized Graph Modification Algorithms},
  url        = {https://bora.uib.no/bora-xmlui/handle/1956/10774},
  urldate    = {2024-12-16},
  file       = {:Drange2015.pdf:PDF},
  groups     = {(T) Graph-Modification, (T) GMPs - Applications, (T) Referenced, Graph Modification},
  optchecked = {2025-01-14},
  school     = {The University of Bergen},
  year       = {2015},
}

@InProceedings{Fomin2025,
  author    = {Fedor V. Fomin and Petr A. Golovach and Tanmay Inamdar and Saket Saurabh and Meirav Zehavi},
  booktitle = {{ITCS} 2025},
  title     = {Parameterized Geometric Graph Modification with Disk Scaling},
  doi       = {10.4230/LIPICS.ITCS.2025.51},
  pages     = {51:1--51:17},
  publisher = {Schloss Dagstuhl - Leibniz-Zentrum f{\"{u}}r Informatik},
  series    = {LIPIcs},
  volume    = {325},
  groups    = {Graph Modification},
  priority  = {prio1},
  year      = {2025},
}

@Article{Fomin2024-2,
  author    = {Fomin, Fedor V. and Golovach, Petr A. and Inamdar, Tanmay and Saurabh, Saket and Zehavi, Meirav},
  journal   = {Discrete {\&} Computational Geometry},
  title     = {(Re)packing Equal Disks into Rectangle},
  year      = {2024},
  doi       = {10.1007/s00454-024-00633-1},
  issn      = {1432-0444},
  number    = {4},
  pages     = {1596--1629},
  volume    = {72},
  file      = {:Fomin2024-2.pdf:PDF},
  groups    = {Disk Graphs, Graph Modification},
  publisher = {Springer Science and Business Media LLC},
}

@Book{Groetschel1993,
  author    = {Grötschel, Martin and Lovász, László and Schrijver, Alexander},
  title     = {Geometric Algorithms and Combinatorial Optimization},
  doi       = {10.1007/978-3-642-78240-4},
  isbn      = {9783642782404},
  publisher = {Springer Berlin Heidelberg},
  groups    = {Libros},
  journal   = {Algorithms and Combinatorics},
  year      = {1993},
}

@Article{Karmarkar1984,
  author    = {Karmarkar, N.},
  journal   = {Combinatorica},
  title     = {A new polynomial-time algorithm for linear programming},
  year      = {1984},
  doi       = {10.1007/bf02579150},
  issn      = {1439-6912},
  number    = {4},
  pages     = {373--395},
  volume    = {4},
  publisher = {Springer Science and Business Media LLC},
}

@Book{Drezner2001,
  author     = {Zvi Drezner and Horst W. Hamacher},
  title      = {Facility location - Applications and Theory},
  isbn       = {978-3-540-42172-6},
  publisher  = {Springer},
  groups     = {(T) Geometric Median, (T) Referenced},
  optchecked = {2025-01-14},
  year       = {2002},
}

@Article{Weiszfeld2009,
  author     = {Endre Weiszfeld and Frank Plastria},
  title      = {On the point for which the sum of the distances to \emph{n} given points is minimum},
  doi        = {10.1007/s10479-008-0352-z},
  number     = {1},
  pages      = {7--41},
  volume     = {167},
  abstract   = {Translation with annotations of E. Weiszfeld, Sur le point pour lequel la somme des distances de n points donnés est minimum, Tôhoku Mathematical Journal (first series), 43 (1937) pp. 355–386.},
  file       = {Full Text PDF:https\://link.springer.com/content/pdf/10.1007%2Fs10479-008-0352-z.pdf:application/pdf},
  groups     = {(T) Geometric Median, (T) Referenced},
  journal    = {Annals of Operations Research},
  optchecked = {2025-01-17},
  readstatus = {skimmed},
  year       = {2009},
}

@Article{Bajaj1988,
  author     = {Bajaj, Chanderjit},
  title      = {The algebraic degree of geometric optimization problems},
  doi        = {10.1007/BF02187906},
  number     = {2},
  pages      = {177--191},
  volume     = {3},
  file       = {Full Text PDF:https\://link.springer.com/content/pdf/10.1007%2FBF02187906.pdf:application/pdf},
  groups     = {(T) Geometric Median, (T) Referenced},
  journal    = {Discrete \& Computational Geometry},
  keywords   = {Symmetric Group, Galois Group, Permutation Group, Solution Point, Galois Theory},
  optchecked = {2025-01-12},
  readstatus = {skimmed},
  year       = {1988},
}

@InProceedings{Cohen2016,
  author     = {Michael B. Cohen and Yin Tat Lee and Gary L. Miller and Jakub Pachocki and Aaron Sidford},
  booktitle  = {{STOC} 2016},
  title      = {Geometric median in nearly linear time},
  doi        = {10.1145/2897518.2897647},
  pages      = {9--21},
  publisher  = {ACM},
  file       = {:cohen2016.pdf:PDF},
  groups     = {(T) Referenced, (T) Geometric Median, Algorithms},
  optchecked = {2025-01-14},
  year       = {2016},
}

@Article{Cai1996,
  author   = {Cai, Leizhen},
  title    = {Fixed-parameter tractability of graph modification problems for hereditary properties},
  doi      = {10.1016/0020-0190(96)00050-6},
  issn     = {0020-0190},
  number   = {4},
  pages    = {171--176},
  volume   = {58},
  file     = {:Cai1996.pdf:PDF},
  groups   = {(T) Graph-Modification, Graph Modification},
  journal  = {Information Processing Letters},
  keywords = {Design of algorithms, Graph algorithms, Fixed-parameter tractability, Graph modification problems},
  year     = {1996},
}

@Article{Shamir2004,
  author    = {Shamir, Ron and Sharan, Roded and Tsur, Dekel},
  title     = {Cluster graph modification problems},
  doi       = {10.1016/j.dam.2004.01.007},
  issn      = {0166-218X},
  number    = {1–2},
  pages     = {173--182},
  volume    = {144},
  groups    = {(T) Graph-Modification, (T) GMPs - vertex, Graph Modification},
  journal   = {Discrete Applied Mathematics},
  publisher = {Elsevier BV},
  year      = {2004},
}

@Book{garey1979,
  author     = {M. R. Garey and David S. Johnson},
  title      = {Computers and Intractability: {A} Guide to the Theory of NP-Completeness},
  isbn       = {0-7167-1044-7},
  publisher  = {W.H. Freeman},
  file       = {:名大/小野研究室/books/Computers and Intractability_ A Guide to the Theory of NP-Completeness (1979, W. H. Freeman).pdf:PDF},
  groups     = {(T) GMPs - vertex, (T) Referenced, Ball Collecting Problem, Graph Modification},
  optchecked = {2025-01-14},
  year       = {1979},
}

@InProceedings{HonoratoDroguett2024,
  author    = {Nicol{\'{a}}s Honorato-Droguett and Kazuhiro Kurita and Tesshu Hanaka and Hirotaka Ono},
  booktitle = {{IJTCS-FAW} 2024},
  title     = {Algorithms for Optimally Shifting Intervals Under Intersection Graph Models},
  doi       = {10.1007/978-981-97-7752-5_5},
  pages     = {66--78},
  publisher = {Springer Singapore},
  volume    = {14752},
  groups    = {(T) Graph-Modification, (T) Intersection Graphs, (T) GED-main, Graph Modification},
  year      = {2024},
}

\end{document}